\documentclass[a4paper, 11pt]{article}
\usepackage[left=2cm,top=1.9cm, bottom=1in, right=2cm]{geometry} 

\usepackage{graphicx,subfigure}
\usepackage{amsfonts,amsmath,amssymb, mathrsfs, amsthm}
\usepackage{multirow}
\usepackage{epic,eepic,eepicemu}
\usepackage{epsf}
\usepackage{epsfig}
\usepackage{graphics}
\usepackage{mathrsfs}
\usepackage{psfrag}
\usepackage{tikz}
\usepackage{url}

\usetikzlibrary{arrows}
\usepackage{verbatim}
\newtheorem{theorem}{Theorem}
\newtheorem{definition}{Definition}

\newtheorem{corollary}{Corollary}
\newtheorem{remark}{Remark}
\newtheorem{example}{Example}
\makeatletter
\newcommand*{\rom}[1]{\expandafter\@slowromancap\romannumeral #1@}
\makeatother

\DeclareMathOperator*{\Motimes}{\text{\raisebox{0.25ex}{\scalebox{0.8}{$\bigotimes$}}}}
\usepackage[skip=2pt,font=footnotesize]{caption}

\hyphenation{op-tical net-works semi-conduc-tor}

%

\hyphenation{op-tical net-works semi-conduc-tor}

\begin{document}

\title{Zero Error Coordination}

\author{Mahed~Abroshan$^*$, Amin~Gohari$^*$ and Sidharth~Jaggi$^\dagger$\\$*$ ISSL Lab, Department of Electrical  Engineering, Sharif University of Technology,
\\ $\dagger$ Department of Information Engineering, The Chinese University of Hong Kong.}

\allowdisplaybreaks
\maketitle

\begin{abstract}
In this paper, we consider a zero error coordination problem wherein the nodes of a network exchange messages to be able to perfectly coordinate their actions with the individual observations of each other. While previous works on coordination commonly assume an asymptotically vanishing error, we assume \emph{ exact, zero error} coordination. Furthermore, unlike previous works that employ the empirical or strong notions of coordination, we define and use a notion of \emph{set coordination.}  This notion of coordination bears similarities with the empirical notion of coordination. We observe that set coordination, in its special case of  two nodes with a one-way communication link is equivalent with the ``Hide and Seek" source coding problem of McEliece and Posner. The Hide and Seek problem has known intimate connections with graph entropy, rate distortion theory,  R\'enyi mutual information and even error exponents. Other special cases of the set coordination problem relate to Witsenhausen's zero error rate and the  distributed computation problem. These connections motivate a better understanding of set coordination, its connections with empirical coordination, and its study in more general setups. This paper takes a first step in this direction by proving new results for two node networks. 
\end{abstract}

\section{Introduction}

Consider a network where each node of the network has a private observation and needs to produce an action. These actions should be coordinated with the observations; therefore some form of communication is necessary among the nodes. The fundamental limits of the required communication was originally studied by Cuff et al.~in  \cite{cuff} where the authors assumed that the observation of node $i$ ($1\leq i\leq m$) are  i.i.d.~repetitions of some random variable $X_i$. The joint distribution of $(X_1, X_2, ..., X_m)$ was assumed to be a given. Denoting the action of the $i$-th node by $Y_i$, coordination was then modeled as requiring the joint pmf of the outputs conditioned on the inputs to be very close to some given $p(y_1, y_2, ..., y_m|x_1, x_2, ...,x_m)$. Here the authors introduce two notions of empirical and strong coordination: in the strong coordination, memoryless repetitions of the channel $p(y_1, y_2, ..., y_m|x_1, x_2, ...,x_m)$ are simulated, whereas in empirical coordination, only the data histograms (or its joint type) is equal  $p(y_1, y_2, ..., y_m, x_1, x_2, ...,x_m)$. 
The common theme is that the conditional pmf $p(y_1, y_2, ..., y_m|x_1, x_2, ...,x_m)$ is approximated \emph{asymptotically} as the number of i.i.d.~observations (the block length) goes to infinity. In this work, however, we are interested in exact zero error coordination, \emph{i.e.,} coordination should be achieved with probability one. In this way, our work is related to \cite[Sec IV]{exact}, \cite{Winter} on exact strong coordination capacity, however we adopt a different \emph{set coordination} criterion (which is closer to the empirical notion of coordination). 

In our setup, for any observation vector $(x_1, x_2, ..., x_m)$ by the $m$ nodes of a network, we assume a permissible set of output actions $(y_1, y_2, ..., y_m)$. In other words, we are not directly interested in simulating a given $$p(y_1, y_2, ..., y_m|x_1, x_2, ...,x_m).$$ Rather, for every $(x_1, x_2, ..., x_m)$, we define a set $$\mathcal{A}_{x_1, x_2, ..., x_m}\subseteq \mathcal{Y}_1\times \mathcal{Y}_2\times\cdots\times \mathcal{Y}_m,$$
such that $(y_1, y_2, ..., y_m)\in\mathcal{A}_{x_1, x_2, ..., x_m}$. Here $\mathcal{Y}_i$ is the action set of node $i$. We call this a ``set coordination." In Section \ref{sec:set-coordination}, we compare set coordination with empirical coordination.
\begin{example} If $|\mathcal{A}_{x_1, x_2, ..., x_m}|=1$, the value of $(y_1, y_2, ..., y_m)$ will be uniquely specified and will be a deterministic function of $(x_1, x_2, ..., x_m)$. In this case, coordination reduces to distributed computation. Distributed computation is itself a more general problem that the message transmission problem, since the functions computed by the nodes can be taken to be the message of other nodes.\end{example}

Consider the special case of a network with two nodes with node one has input $X_1$ and node two producing output $Y_2$. We assume that the input of node two, $X_2$, and the output of node one, $Y_1$, are disabled, \emph{i.e.,} $|\mathcal{X}_2|=|\mathcal{Y}_1|=1$. Then for every $x_1$ we have a set $\mathcal{A}_{x_1}\subseteq \mathcal{Y}_2$. Assume a one-way communication link from node one to node two. The goal of the first user will be to send a message from node one to node two that will enable production of $y_2\in\mathcal{A}_{x_1}$ at node two. We show in Section \ref{sec:3} that this  special case of the coordination problem is equivalent with the  ``Hide and Seek" problem of McEliece and Posner \cite{b}.  McEliece and Posner define a source coding problem and a zero-sum ``Hide and Seek" game. Rather surprisingly, they illustrate that the optimal compression rate of the source coding problem can be expressed in terms of  the Nash equilibrium of the game. Additional insight was provided by Lov\'asz who provided an elegant combinatorial argument for the result of McEliece and Posner in \cite{c}. We review other related results and in particular connections with R\'enyi mutual information of order $\alpha$ is discussed in Section \ref{sec:renyi}.

We continue by two examples that illustrate connections with zero error rate distortion  (see  \cite[Ch. 2]{d}, \cite{Max}), and with graph entropy. Firstly, consider a non-negative distortion function satisfying $d(x_1, y_2)=0$ if and only if $y_2\in\mathcal{A}_{x_1}$. Then, coordination is equivalent with zero distortion in reconstruction. Secondly, consider the source coding problem for a source that is taking values in a set $\mathcal X_1$. We are given a graph $\mathcal G$ on $\mathcal X_1$, where two symbols $x_1$ and $x'_1$ are connected to each other if it is legitimate to reconstruct $x'_1$ when the source value is $x_1$. We can model this by assuming that $\mathcal Y_2=\mathcal X_1$ and $\mathcal{A}_{x_1}$ being equal to the set of all $x'_1$ that are connected to $x_1$. The message transmitted from node one to node two represents the compressed message. The minimum compression rate in this case is equal to the logarithm of the chromatic number of the complement of $\mathcal G$ in the one-shot case when only one instance of the source is given. After coloring the vertices of  the complement of $\mathcal G$, the message can be the identity of the color that is assigned to the source symbol.  In the asymptotic case when multiple instances of the source are observed, the answer is the logarithm of the fractional  chromatic number \cite[p. 2215]{zero-error-it}.

For the asymptotically vanishing error model, authors in \cite[Conjecture 1]{cuff}  conjectured that empirical coordination and strong coordination have the same rate regions when infinite common randomness is provided to the parties. Considering a special two node network, we observe connections in the zero error model. In \cite{Winter}, authors considers strong coordination with unlimited common randomness and arrives at expressions that match the one given by McEliece and Posner. However, the work of McEliece and Posner (or its follow up works) are not cited  in \cite{Winter} and the connection is not noted. See Section \ref{sec:strong-coordination} for more details.

This paper is organized as follows: in Section \ref{prelim} we set up the notation that we use. Section \ref{sec:set-coordination} defines set coordination and defines one-way coordination capacity for a two node problem. A general lower bound for this problem is given in Section \ref{sec:general-lower}. Section \ref{sec:3} provides a detailed treatment for the special case of two nodes and discusses its connections with various known results. Section \ref{sec:with-side} computes the coordination capacity when the side information of the second node is a function of the side information of the first node. Finally, in Section \ref{sec:linear}, we consider linear coordination and provide several new results. Extensions to MAC and BC setups are given in Section \ref{sec:extensions}.

\section{Notation and Preliminaries}\label{prelim}
We adopt the notation of \cite{elgamal-kim}. In particular, we show the set $\{1,2,...,m\}$ by $[m]$, and the set $\{k+1, k+2, ..., m\}$ by $[k+1:m]$. All random variables in this paper are finite discrete random variables. All the logarithms are in base 2 in this paper.

Given two graphs $\mathcal G_1, \mathcal G_2$, the tensor product $\mathcal G_1\otimes\mathcal G_2$ is a graph whose vertex set is the Cartesian product of the vertex sets of $\mathcal G_1$ and $\mathcal G_2$ defined as follows: two vertices $(u_1,u_2)$ and $(v_1,v_2)$ in  are adjacent in $\mathcal G_1\otimes\mathcal G_2$  if and only if $u_1$ is adjacent with $v_1$ in $\mathcal G_1$ and $u_2$ is adjacent with $v_2$  in $\mathcal G_1$ .

There are many definitions for R\'enyi mutual information (see \cite{Verdu} for a review). One definition for R\'enyi mutual information of a joint pmf $p(x,y)$ is as follows:
\begin{equation}
I_{\alpha}(X;Y)=\min_{q(y)} D_{\alpha}(p(x,y)||p(x)q(y)),\label{defIalpha}
\end{equation}
Where $D_\alpha$ is the R\'enyi divergence between two pmfs is defined as follows: $$D_\alpha(p\| q) = \frac{1}{\alpha -1} \log \left(\sum_x p(x)^{\alpha}q(x)^{1-\alpha}\right).$$ 
Note that as $\alpha$ converges to one, R\'enyi divergence and R\'enyi mutual information of order $\alpha$ tend to the KL divergence and Shannon's mutual information. 

In \cite[Eq 13]{e}, it is shown that mutual information of order $\alpha$, as defined in equation \eqref{defIalpha}, is equal to:
\begin{equation}
I_{\alpha}(X;Y)=\frac{\alpha}{\alpha-1}\log \bigg(\sum_{y}\Big[\sum_{x}p(x)p(y|x)^\alpha\Big]^{1/\alpha}\bigg).
\end{equation}
R\'enyi mutual information $I_\alpha(X;Y)$ is a non-decreasing function of $\alpha$ for $\alpha\in[0,\infty]$ \cite[Thm 3]{renyi}.  

We now provide explicit expressions for $I_\alpha(X;Y)$ when $\alpha=0$ and $\alpha=\infty$.
When $\alpha$ goes to zero, $\sum_x p(x)p(y|x)^{\alpha}$ rises to power ${1}/{\alpha}$ which goes to infinity; thus only the largest term is important. In fact, one can show that
\begin{align}
\begin{split}
I_0(X;Y)&=\lim_{\alpha \to 0} \frac{\alpha}{\alpha-1}\log \bigg(\max_{y}\Big[\sum_{x}p(x)p(y|x)^\alpha\Big]^{1/\alpha}\bigg)\\&=\lim_{\alpha \to 0}\frac{1}{\alpha-1}\log \Big(\max_{y}\big[\sum_{x}p(x)p(y|x)^\alpha\big]\Big) \\&=-\log(\max_{y}\sum_{p(y|x)>0} p(x)).
\end{split}
\end{align}
Similarly, one can show that as $\alpha$ tends to infinity, we have
\begin{align}I_\infty(X;Y)=\log\left(\sum_{y}\underset{x: p(x)>0}{\max}\;p(y|x)\right)\label{eqn:mutu0inf}.\end{align}
\section{System Model}
\subsection{Set Coordination}\label{sec:set-coordination}
\begin{definition}
Given an input pmf $p(x_1, \cdots, x_m)$ and action sets $\mathcal{A}_{x_1, x_2, ..., x_m}$, one-shot and asymptotic coordination are defined as follows: in one-shot coordination, the parties observe only one instance of $X_i$ and coordination is achieved if $(y_1, y_2, ..., y_m)\in\mathcal{A}_{x_1, x_2, ..., x_m}$ for any $(x_1,\cdots, x_m)$ where $p(x_1, \cdots, x_m)>0$. 
In the asymptotic version, the parties observe $n$ i.i.d.~ repetitions of the sources $X_1^n, X_2^n, \cdots, X_m^n$. Coordination is achieved if  $(y_{1i}, y_{2i}, ..., y_{mi})\in\mathcal{A}_{x_{1i}, x_{2i}, ..., x_{mi}}$  for any $i\in[n]$ and any $(x^n_1,\cdots, x^n_m)$ where $p(x^n_1, \cdots, x^n_m)=\prod_{i=1}^np(x_{1i}, \cdots, x_{mi})>0.$
\end{definition}

We assume that the nodes have access to limited communication resources, as well as possibly private or common randomness. However, similar to empirical coordination, without loss of generality we can assume that the nodes are deterministic and do not use shared or private randomness. This is because  $(y_1, y_2, ..., y_m)\in\mathcal{A}_{x_1, x_2, ..., x_m}$ with probability one, and hence it has to hold for all possible values of the shared randomness variable. 

Set coordination is related to empirical coordination. Take some arbitrary conditional pmf $$p(y_1, y_2, ..., y_m|x_1, x_2, ...,x_m)$$ such that $p(y_1, ..., y_m|x_1, ...,x_m)>0$ only if $(y_1, y_2, ..., y_m)\in\mathcal{A}_{x_1, x_2, ..., x_m}$. Then a zero-error empirical coordination code for $p(y_1, y_2, ..., y_m|x_1, x_2, ...,x_m)$ is also a zero-error set coordination code. However, set coordination is more relaxed in the asymptotic formulation. Take a set coordination code of block length $n$ and two sequence $(x_1^n, x_2^n, \cdots, x_m^n)$ and $(\tilde x_1^n, \tilde x_2^n, \cdots, \tilde x_m^n)$ of the same type. Let $(y_1^n, y_2^n, \cdots, y_m^n)$ and $(\tilde y_1^n, \tilde y_2^n, \cdots, \tilde y_m^n)$ denote the actions of the nodes in response to $(x_1^n, x_2^n, \cdots, x_m^n)$ and $(\tilde x_1^n, \tilde x_2^n, \cdots, \tilde x_m^n)$  respectively. Then it can be the case that the joint types of the sequences $(x_1^n, x_2^n, \cdots, x_m^n, y_1^n, y_2^n, \cdots, y_m^n)$ and $(\tilde x_1^n, \tilde x_2^n, \cdots, \tilde x_m^n, \tilde y_1^n, \tilde y_2^n, \cdots, \tilde y_m^n)$ are different. Therefore, one cannot assign a single empirical conditional type $p(y_1, ..., y_m|x_1, ...,x_m)$ to the set coordination code. 

\begin{remark}
Despite the apparent difference between set coordination and empirical coordination, it would be interesting to study whether set coordination can be expressed in terms of empirical coordination (under either zero error, or asymptotically zero error criteria). As we will see in the proof of Theorem \ref{thm:t1}, the known converse techniques for empirical coordination extend to set coordination. 
\end{remark}

\subsection{One-way coordination capacity}\label{sec:one-waycoordination}
A two nodes network is characterized by two alphabet sets $\mathcal{X}_1$ and $\mathcal{X}_2$ for inputs and two action sets $\mathcal{Y}_1$ and $\mathcal{Y}_2$. For each pair of inputs $(x_1,x_2)\in \mathcal{X}_1\times\mathcal{X}_2$ we have a permissible action set $A_{x_1,x_2}\subseteq \mathcal{Y}_1\times\mathcal{Y}_2$. We are given some $p(x_1, x_2)$ on the inputs. Assume that there is a one-way communication link of limited rate $R$ from node one to node two, as depicted in Fig.~\ref{fig:one-way}. 
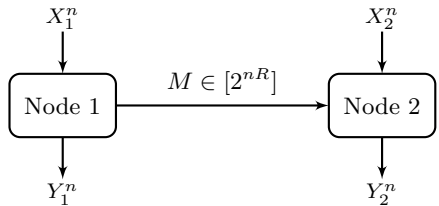
\begin{figure}
\begin{center}
\begin{tikzpicture}[node distance=3.5cm,auto,>=latex', scale=1.4]
    \draw[black, thick,rounded corners] (0,0) rectangle (1,0.6) node[font=\fontsize{9.5}{144}\selectfont,pos=.5,anchor=center] {Node 1};
	\draw[black, thick,->] (1,.3) -- (3,.3) node[font=\fontsize{9}{144}\selectfont,pos=.5,anchor=south] {$M\in [2^{nR}]$}; 
	\draw[black,thick,rounded corners](3,0) rectangle (4,0.6) node[font=\fontsize{9.5}{144}\selectfont,pos=.5,anchor=center]{Node 2}; 
	\draw[black, thick,->] (0.5,1) ->(0.5,0.6) node[font=\fontsize{8.5}{144}\selectfont,pos=.2,anchor=south]{$X^n_1$};
	\draw[black, thick,->] (0.5,0) -> (0.5,-0.4) node[font=\fontsize{8.5}{144}\selectfont,pos=.85,anchor=north] {$Y^n_1$};
	\draw[black, thick,->] (3.5,1) -> (3.5,0.6) node[font=\fontsize{8.5}{144}\selectfont,pos=.2,anchor=south] {$X^n_2$};
	\draw[black, thick,->] (3.5,0) -> (3.5,-.4) node[font=\fontsize{8.5}{144}\selectfont,pos=.85,anchor=north] {$Y^n_2$};
\end{tikzpicture}
\end{center}
\caption{Coordination using a one-way communication link.}\label{fig:one-way}
\end{figure}

\begin{definition}
Coordination is achievable with one-way communication rate $R$ with block length $n$ if there are encoding and decoding functions
\begin{align}
\nonumber
&\mathcal{E}:{\mathcal{X}_1}^n\mapsto[2^{nR}],\qquad\mathcal{D}_1:\mathcal{X}_1^n\mapsto \mathcal{Y}^n_1,\qquad \mathcal{D}_2:[2^{nR}]\times \mathcal{X}_2^n\mapsto \mathcal{Y}^n_2,
\end{align}
such that $(Y_1^n, Y_2^n)=(\mathcal{D}_1(X_1^n), \mathcal{D}_1(M, X_2^n))$ is coordinated with $(X_1^n, X_2^n)$ where the message $M=\mathcal{E}(X_1^n)$.

A rate $R$ is said to be one-shot achievable if it is achievable with a code of block length $n=1$, and is said to be asymptotically achievable if  it is achievable with a code with for some arbitrarily large block length $n$. We use $C$ to denote the maximum one-shot achievable rate, and $\bar{C}$ to denote the supremum of the  asymptotically achievable rates.
\end{definition}
\begin{remark}Computing the  one-way capacity $\bar C$ is in general a difficult problem. For instance, consider the special case of $Y_1$ being a constant random variable, i.e. $|\mathcal{Y}_1|=1$, and $\mathcal{Y}_2=\mathcal{X}_1$. Further assume $y_2\in A_{x_1, x_2}$ if and only if $y_2=x_1$; in other words, the task of node one will be to communicate $x_1$ to node two. Given some $p(x_1, x_2)$, the problem reduces to a zero-error version of the Slepian-Wolf problem. The minimum rate for  zero-error source coding with side information is known as Witsenhausen's zero-error rate and is open in general \cite{Witsenhausens}. Practical code designs for this problem can be found in \cite{Slepian1, Slepian2, Slepian3}.
\end{remark}

\section{A general lower bound}\label{sec:general-lower}
One can use the fact that zero-error coordination is a more stringent condition than vanishing error coordination to derive the following lower bound on the asymptotic coordination capacity. Let $\mathcal{P}$ be the class of all $p(y_1,y_2|x_1,x_2)$ such that $p(y_1,y_2|x_1,x_2)>0$ only if $(y_1, y_2)\in A_{x_1, x_2}$. Then we have:
\begin{theorem}\label{thm:t1}
If node one observes $X_1$ and node two observes $X_2$ and they want to produce $Y_1$ and $Y_2$ respectively then minimum rate required from node one to node two is bounded from below by:
\begin{equation}
\max_{\substack{q(x_1,x_2): \\q(x_1, x_2)=0 \text{ if }p(x_1, x_2)=0}} ~\min_{q(y_1,y_2|x_1,x_2)\in \mathcal{P}} \min_{\substack{ q(f|x_1,y_1): \\I_q(X_1,Y_1;Y_2|F,X_2)=0}} I(F;X_1|X_2).
\end{equation}
\end{theorem}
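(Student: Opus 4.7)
The plan is a Fano-style, empirical-coordination-style single-letter converse, as the Remark preceding the theorem already hints. Throughout, write $M = \mathcal{E}(X_1^n)$, $Y_1^n = \mathcal{D}_1(X_1^n)$, $Y_2^n = \mathcal{D}_2(M, X_2^n)$. The outer maximum over $q$ comes for free: any zero-error code for $p(x_1,x_2)$ is automatically a zero-error code for every $q(x_1,x_2)$ with $\mathrm{supp}(q) \subseteq \mathrm{supp}(p)$, because the set inclusion $(Y_{1i},Y_{2i}) \in A_{X_{1i},X_{2i}}$ must hold for every jointly feasible source sequence. I would therefore fix such a $q$ and prove the remainder of the bound with the source distribution replaced by $q$.

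Next, single-letterise the rate. From $nR \geq H(M) \geq I(M; X_1^n \mid X_2^n) = \sum_{i=1}^n I(M; X_{1i} \mid X_1^{i-1}, X_2^n)$, define the auxiliary $F_i := (M, X_1^{i-1}, X_2^{[n]\setminus i})$. By i.i.d.\ of the source, $X_{1i}$ is conditionally independent of $(X_1^{i-1}, X_2^{[n]\setminus i})$ given $X_{2i}$, so each summand collapses to $I(F_i; X_{1i} \mid X_{2i})$. Introduce a uniform time-sharing variable $Q \in [n]$ independent of all other variables, and set $F := (F_Q, Q)$, $(X_1,X_2,Y_1,Y_2) := (X_{1Q},X_{2Q},Y_{1Q},Y_{2Q})$. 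Since the per-coordinate marginals are identical and $Q$ is independent of them, standard time-sharing manipulations yield $R \geq I(F; X_1 \mid X_2)$ with $(X_1,X_2) \sim q$.

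It then remains to verify the two structural constraints in the bound. For $I_q(X_1,Y_1;Y_2 \mid F, X_2) = 0$: at each coordinate $(F_i, X_{2i})$ contains $(M, X_2^n)$, so $Y_{2i} = \mathcal{D}_2(M, X_2^n)_i$ is a deterministic function of $(F_i, X_{2i})$, and time-sharing preserves this. For the Markov chain $X_2 \to (X_1,Y_1) \to F$ implicit in the notation $q(f \mid x_1, y_1)$: given $X_{1i}$, the block $(X_1^{[n]\setminus i}, X_2^{[n]\setminus i})$ is independent of $X_{2i}$ by i.i.d., so both $Y_{1i}$ and $F_i$, being deterministic functions of $(X_{1i}, X_1^{[n]\setminus i}, X_2^{[n]\setminus i})$, are independent of $X_{2i}$ given $X_{1i}$; this gives $X_{2i} \perp F_i \mid (X_{1i}, Y_{1i})$. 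Finally, the $r(y_1,y_2 \mid x_1,x_2)$ induced by the time-shared code lies in $\mathcal{P}$ because the zero-error property forces $(Y_1,Y_2) \in A_{X_1,X_2}$ almost surely.

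The main obstacle is the choice of $F_i = (M, X_1^{i-1}, X_2^{[n]\setminus i})$: this particular augmentation of $M$ must simultaneously telescope the per-letter rate term to $I(F_i; X_{1i} \mid X_{2i})$, make $Y_{2i}$ a function of $(F_i, X_{2i})$, and satisfy the implicit Markov requirement. The last of the three — the Markov constraint hidden in writing the auxiliary as $q(f \mid x_1, y_1)$ — is the step easiest to overlook and the only part of the argument not reducible to routine chain-rule bookkeeping.
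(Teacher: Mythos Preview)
Your proposal is correct and takes essentially the same approach as the paper. The paper's own proof is brief and defers the single-letterization entirely to the converse in \cite[Appendix~G]{Yassaee} (specialized to one round), noting only that the final step there is replaced by the observation $p(y_{1Q},y_{2Q}\mid x_{1Q},x_{2Q})\in\mathcal{P}$; you have in effect unpacked that referenced converse explicitly, with the standard auxiliary $F_i=(M,X_1^{i-1},X_2^{[n]\setminus i})$, the time-sharing $Q$, and the same support-reduction argument for the outer maximum over $q$.
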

\begin{proof}[Proof of Theorem \ref{thm:t1}] Coordination capacity depends only on the support of $p(x_1, x_2)$, i.e.~$\{(x_1, x_2): p(x_1, x_2)>0\}$. Furthermore, if the support of a pmf $q(x_1, x_2)$ is smaller than the support of $p(x_1, x_2)$, the coordination capacity for $q(x_1, x_2)$ will be less than or equal to that for $p(x_1, x_2)$. Therefore,  it suffices to show that the coordination capacity for the input pmf $p(x_1, x_2)$ is bounded from below by
\begin{equation}
\min_{p(y_1,y_2|x_1,x_2)\in \mathcal{P}} \min_{\substack{p(f|x_1,y_1): \\I_p(X_1,Y_1;Y_2|F,X_2)=0}} I(F;X_1|X_2).
\end{equation}
Take an arbitrary set coordination code of length $n$. Consider the  converse given in \cite[Appendix G]{Yassaee} for interactive channel simulation under the empirical coordination constraint. Consider the special case of one round of communication $r=1$. This converse can be exactly mimicked for the set coordination code until equation (102). In equations  (102) onwards in \cite[Appendix G]{Yassaee}, it is argued that the joint pmf of $(Y_{1Q}, Y_{2Q}, X_{1Q}, X_{2Q})$, where $Q$ is the time sharing random variable, is close to the desired joint pmf $p(y_1,y_2,x_1,x_2)$ that we wish to simulate. By contrast, in set coordination, we have that $(Y_{1Q}, Y_{2Q})\in A_{X_{1Q}, X_{2Q}}$ with probability one. Therefore, $p(y_{1Q}, y_{2Q}|x_{1Q}, x_{2Q})\in\mathcal{P}$. Therefore, the last step in the proof can be completed, and we get that the transmission rate is greater than or equal to $I(F;X_{1Q}|X_{2Q})$. Since  $p(y_{1Q}, y_{2Q}|x_{1Q}, x_{2Q})\in\mathcal{P}$, we get that the coordination rate is greater than or equal to
\begin{equation}
\min_{p(y_1,y_2|x_1,x_2)\in \mathcal{P}} \min_{\substack{p(f|x_1,y_1): \\I_p(X_1,Y_1;Y_2|F,X_2)=0}} I(F;X_1|X_2).
\end{equation}
\end{proof}
\begin{corollary}\label{cor} If $X_2$ and $Y_1$ are constant random variables, random variable $F$ will satisfy $X_1\rightarrow F\rightarrow Y_2$ and we would like to minimize $I(X_1;F)$. The choice of $F=Y_2$ will be optimal and we get that the coordination capacity is greater than or equal to 
$$\max_{q(x_1)}\min_{p(y_2|x_1)\in \mathcal{P}}I(X_1;Y_2).$$
We will see later that this lower bound is tight.
\end{corollary}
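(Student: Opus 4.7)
The plan is to specialize Theorem \ref{thm:t1} to the degenerate setting $|\mathcal{X}_2| = |\mathcal{Y}_1| = 1$ and simplify each ingredient in turn. With $X_2$ and $Y_1$ deterministic, the outer maximization over $q(x_1, x_2)$ collapses to a maximization over $q(x_1)$ (with the support condition $q(x_1) = 0$ whenever $p(x_1) = 0$ inherited from Theorem \ref{thm:t1}), the objective $I(F; X_1 \mid X_2)$ reduces to $I(F; X_1)$, and the Markov-like constraint $I_q(X_1, Y_1; Y_2 \mid F, X_2) = 0$ becomes $I(X_1; Y_2 \mid F) = 0$, i.e., the Markov chain $X_1 \to F \to Y_2$. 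Correspondingly, the class $\mathcal{P}$ reduces to conditional pmfs $p(y_2 \mid x_1)$ whose support for each $x_1$ lies inside $\mathcal{A}_{x_1}$.

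Next, I would argue that for any fixed $p(y_2 \mid x_1) \in \mathcal{P}$ the remaining inner minimization over auxiliary $F$ (subject only to $X_1 \to F \to Y_2$) equals $I(X_1; Y_2)$. One direction is the data processing inequality: along $X_1 \to F \to Y_2$, we have $I(X_1; Y_2) \leq I(X_1; F)$, giving the lower bound. The matching upper bound is trivial by taking $F = Y_2$, which clearly satisfies the Markov chain and attains equality. Hence the inner double minimum simplifies to $\min_{p(y_2 \mid x_1) \in \mathcal{P}} I(X_1; Y_2)$.

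Putting the two reductions together and reattaching the outer maximization over $q(x_1)$ directly yields the claimed lower bound $\max_{q(x_1)} \min_{p(y_2 \mid x_1) \in \mathcal{P}} I(X_1; Y_2)$. There is no real obstacle here: the entire argument is a substitution into Theorem \ref{thm:t1} combined with a one-line use of DPI. The only point that deserves explicit mention is that $F = Y_2$ is a feasible choice because the functional form $q(f \mid x_1, y_1)$ allowed in Theorem \ref{thm:t1} permits $F$ to be generated from $X_1$ via the same kernel that produces $Y_2$, so the minimizer is attained inside the feasible set rather than merely approached.
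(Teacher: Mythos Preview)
Your proposal is correct and follows essentially the same approach as the paper: specialize Theorem~\ref{thm:t1} by collapsing the constants $X_2, Y_1$, read off the Markov chain $X_1 \to F \to Y_2$, and identify $F = Y_2$ as the optimal auxiliary. You add the explicit data-processing step $I(X_1;F)\ge I(X_1;Y_2)$ that the paper leaves implicit; note that this inequality alone already yields the claimed lower bound, so the feasibility remark about ``the same kernel'' (which would produce a conditionally independent copy of $Y_2$ rather than $Y_2$ itself) is inessential to the corollary.
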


 \section{No side information at node two}\label{sec:3}
In this section we restrict to the special case of a two nodes network, with node two having no observation, \emph{i.e.,} $|\mathcal{X}_2|=1$. Without loss of generality we assume that $p(x_1)>0$ for all $x_1$ throughout this section. In this case we can make some simplifications as follows: we can drop $x_2$ from $A_{x_1x_2}$ and define $$A_{x_1}\triangleq \{y_2:\exists y_1\in \mathcal{Y}_1, (y_1,y_2)\in A_{x_1x_2}\}.$$ Notice that if node two chooses a proper action $y_2$ from $A_{x_1}$, by definition there will exist a proper action $y_1$ for node one.  Since node one knows $x_1$ and the message sent to node two as well as the decoding strategy of node two, action $y_2$ and hence $y_1$ can be found by this node. 

\subsection{One-shot achievable rates}\label{one-shot}
We would like to compute the minimum size of the alphabet set of message $M$ sent from node one to node two, such that node two can choose an action $y_2\in\mathcal{Y}_2$ where $y_2 \in A_{x_1}$. In other words, assuming that $M\in \mathcal{M}$, we would like to minimize $|\mathcal{M}|$ as much as possible. The one-shot capacity is minimum of $\log|\mathcal{M}|$.

\textbf{Characterization in terms of graph covers:}
The one-shot capacity can be also expressed in terms of the size of minimum graph cover \cite{b}. Before proceeding, we need some definitions:

\begin{definition}[Coordination Graph]\label{coordinationgraph}
 Consider a bipartite graph $\mathcal G=(\mathcal X_1, \mathcal Y_2)$, where by this notation we mean that the vertices in one part being indexed by elements of $\mathcal X_1$ and the vertices of the other part indexed by elements of $\mathcal Y_2$. An edge is drawn between $x_1\in \mathcal X_1$ and $y_2\in \mathcal{Y}_2$ if and only if $y_2\in A_{x_1}$. We call this a \emph{coordination graph}. Both the one-shot and asymptotic coordination capacity are characterized by the coordination graph. Hence we sometime denote the one-shot and asymptotic coordination capacity by $C(\mathcal G)$ and $\bar C(\mathcal G)$ respectively.
\end{definition}

\begin{definition}[\cite{b}]
Consider a bipartite graph $\mathcal{G}=(\mathcal X_1, \mathcal Y_2)$. A cover set for $\mathcal X_1$ is a subset $\mathcal S\subseteq \mathcal Y_2$ such that for each $x_1\in \mathcal X_1$, there exist some $y_2\in \mathcal S$ such that $x_1$ and $y_2$ are connected.
\end{definition}
Observe that the minimum length of message from node one to node two, $|\mathcal{M}|$ is equal to the size of minimum cover for $\mathcal X_1$. This is because given a cover set $\mathcal S$, node one can simply send the index of $y_2$ in the cover set $\mathcal S$ that is connected to $x_1$. Conversely, given any strategy by node one, one can produce a cover set by putting together the outputs $y_2$ corresponding to different values of the message $M$. 

Finding minimum cover set for graph $\mathcal{G}=(\mathcal X_1, \mathcal Y_2)$ is equivalent to solving following integer linear programming: 
\begin{align}
\begin{split}
 \mathsf{IP} (\mathcal G)&= \min \sum_{y_2\in \mathcal{Y}_2} \alpha(y_2)  \\   
 s.t. \hspace{3 mm} \forall x_1&\in \mathcal{X}_1 \hspace{2 mm} \sum_{y_2: y_2\sim x_1} \alpha(y_2) \geq 1 \hspace{3 mm} \\  &\alpha(y_2)\in \{0,1\}.
\end{split}\label{eqn:LP}
\end{align} 
Here $C(\mathcal G)=\log_2 \mathsf{IP} (\mathcal G)$ is the minimum message length in terms of bits.
\subsection{Asymptotic achievable rates}
In the asymptotic version of the problem, we fix some block length $n$. The first node observes some sequence $x_1^n$ and needs to convey it to node two in a way it can produce $y_2^n$ such that $y_{2i}\in A_{x_{1i}}$ for $i\in [n]$. Then, one can see that the solution to this problem is size of minimum cover for a bipartite graph on $\mathcal{X}_1^n$ and $\mathcal{Y}_2^n$, with two sequences $x_1^n$ and $y_2^n$ connected to each other if and only if $y_{2i}\in A_{x_{1i}}$ for $i\in [n]$. This graph can be expressed as $\mathcal{G}^n$, \emph{i.e., }the tensor product of the graph for the one-shot case, $\mathcal{G}$, with itself by $n$ times. Then the minimum required rate for block length $n$ is $\log (\mathsf{IP}(\mathcal G^{\otimes n}))/n$, and the limit of this when $n$ goes to infinity is the asymptotic coordination capacity is equal to $\bar C(\mathcal G)$, \emph{i.e.,}
\begin{equation}
\bar C(\mathcal G)\triangleq \log\lim_{n \to \infty}\big( \mathsf{IP}(\mathcal G^{\otimes n})^{1/n}\big).
\end{equation}  
The above limit exist by the Fekete's lemma because the sequence of $\log (\mathsf{IP}(\mathcal G^{\otimes n}))$ is superadditive \cite{b}. In fact, for any two arbitrary graphs $\mathcal G_1$ and  $\mathcal G_2$, we have
\begin{align}
\label{prodID}
\mathsf{IP}(\mathcal G_1\otimes \mathcal G_2)\leq \mathsf{IP}(\mathcal G_1)\mathsf{IP}(\mathcal G_2),
\end{align}
because one possible way to find a cover set for $\mathcal G_1\otimes \mathcal G_2$ is to first find a cover set for $\mathcal G_1$ and for $\mathcal G_2$, and then take the Cartesian product of these two cover sets as a cover set for $\mathcal G_1\otimes \mathcal G_2$.

Authors in \cite{b} also consider the above asymptotic version of the problem and show that $\bar C(\mathcal G)=\log_2\mathsf{LP}(\mathcal G)$,  where $\mathsf{LP}(\mathcal G)$ is the relaxation of linear programming given in equation \eqref{eqn:LP} as follows:
\begin{align}
\begin{split}
\mathsf{LP}(\mathcal G)&= \min \sum_{y_2\in \mathcal{Y}_2} \alpha(y_2)  \\   
 s.t. \hspace{3 mm} \forall x_1&\in \mathcal{X}_1 \hspace{2 mm} \sum_{y_2: y_2\sim x_1} \alpha(y_2) \geq 1 \hspace{3 mm} \\  &\alpha(y_2)\geq 0.
\end{split}\label{eqn:LP2}
\end{align} 
Motivated by the Hide and Seek problem of \cite{b}, Lov\'asz provides a combinatorial proof for the above relation in \cite{c}. In order to prove \begin{align}\bar C(\mathcal G)=\log\lim_{n \to \infty}\big( \mathsf{IP}(\mathcal G^{\otimes n})^{1/n}\big)=\log\mathsf{LP}(\mathcal G),\label{LovaszThem}\end{align} firstly, it is clear that the above is relaxation of the linear program given in equation \eqref{eqn:LP}. Therefore $\mathsf{LP}(\mathcal G)\leq  \mathsf{IP}(\mathcal G)$. By using the dual problem of linear programming, Lov\'asz shows that \begin{align}
\label{prodID2}\mathsf{LP}(\mathcal G_1\otimes \mathcal G_2)=\mathsf{LP}(\mathcal G_1)\mathsf{LP}(\mathcal G_2).\end{align} Thus $\mathsf{IP}(\mathcal G^{\otimes n})\geq \mathsf{LP}(\mathcal G^{\otimes n})=\mathsf{LP}(\mathcal G)^n$ for all $n$. Hence $\bar C(\mathcal G)\geq \log(\mathsf{LP}(\mathcal G))$. To show the other direction, $\log(\mathsf{LP}(\mathcal G))\geq \bar C(\mathcal G)$,  Lov\'asz provides an elegant combinatorial argument using a greedy algorithm and the dual of the linear program. We refer the readers to \cite{c} for the details.

\begin{remark}
While computing the minimum cover $C(\mathcal G)$ is an NP-complete problem (it is the sixth problem of Karp's 21 NP-complete problems \cite{NP}), $\bar C(\mathcal G)$ can be calculated in polynomial time, as it is the solution of a real linear program.

It is known that $C(\mathcal G)$ can be arbitrarily large while $\bar C(\mathcal G)$ is arbitrarily small \cite{Orl}. This happens for the mail-order problem of Slepian, Wolf, and Wyner \cite{Mail}. On the other hand, 
$C(\mathcal G)=\bar C(\mathcal G)$ holds for two families of ``interval graphs and forests" that are defined in \cite{Max}. 

\end{remark}

We now make the observation that this special case of the coordination problem is nothing but the ``Hide and Seek"  problem of McEliece and Posner \cite{b}. Consider the following source coding problem. Let $\mathcal{X}_1$ be a set of natural numbers. Alice observes some number $x_1\in\mathcal{X}_1$. We have a certain list of properties such as a number being even, being divisible by five, being a prime number, etc. Given some $x_1\in\mathcal{X}_1$, Alice can find a subset of properties that are satisfied by $x_1$, \emph{e.g.,} if $x_1=5$, it is both prime and divisible by 5. The goal of Alice is to inform Bob of \emph{at least one} of these valid properties; thus the goal is not to inform Bob of $x_1$, but one of its valid properties. The question is the minimum amount of communication needed from Alice to Bob to accomplish this task. It is not difficult to see that this problem is identical to the two node coordination problem with no side information at node two: $\mathcal{Y}_2$ can denote the set of properties and $A_{x_1}$ can contain the list of properties that $x_1$ has.

\textbf{Connection to game theory:}
The solution to the Hide and Seek problem can be expressed in terms of the Nash equilibrium of the following zero-sum game: let player one (hider) choose $x_1\in\mathcal{X}_1$ and player two (seeker) choose some $y_2\in \mathcal{Y}_2$. Player one has to give player two one dollar if $y_2\in \mathcal A_{x_1}$, otherwise the payoff is zero. This game is called Hide and Seek because player one hides and player two seeks player one. We refer the readers to \cite{b} for details. 

The fact that the above source coding is related to this game may come as a surprise. This is due to the fact that the commonly used achievability proofs and converses in information theory are not based on the  Nash value of games. However, we observe that 
\begin{itemize}
\item The key feature of capacity regions and upper bounds thereof are essentially \emph{additive} regions, meaning that they expand by a factor of $n$, when evaluated on $n$ independent repetitions of a problem (see \cite{duality} for a discussion). The Nash value of the above zero-sum repeated game also has the additivity property, as the expected value of the total payoff is equal to the sum of the expected value of the payoffs in the individual games. Therefore, it is quite possible that rates of a capacity region have characterizations in terms of Nash value of carefully constructed games.
\item One can draw simple \emph{operational} connections between the communication problem and the Hide and Seek game. Take an arbitrary code in the communication problem. In the communication problem, the seeker (player two) can ensure that $y_2\in \mathcal A_{x_1}$ if he receives $R$ bits from the hider. However, in the game setup, there is no communication link. Nonetheless, the seeker can still guess the $R$ message bits and win the game with probability at least $2^{-R}$, regardless of the value of $x_1$. Therefore, the seeker has a strategy that gives him a payoff of $2^{-R}$ regardless of the action of the hider. This gives a lower bound on the Nash value of the game in terms of minimum value of $R$. This lower bound is tight by the result of  \cite{b}.
\end{itemize}

\subsubsection{Connection to the Rate Distortion theory}
While elegant, Lov\'asz's proof is combinatorial. Fortunately, the asymptotic capacity $\bar C$ can be found using standard information theory arguments as in \cite{b}. A formal way to do so is to express the the problem in terms of a zero-error rate distortion problem. Consider a distortion measure $d:\mathcal{X}_1\times\mathcal{Y}_2\to\{0,1\}$ where $d(x_1,y_2)=0$ if and only if $x_1$ and $y_2$ be connected, \emph{i.e.,} $y_2\in A_{x_1}$. When  $x_1$ and $y_2$ are not connected, we can set the distortion to an arbitrary positive value, say $d(x_1,y_2)=1$.

Let $R(p(x_1), D)$ be the standard rate distortion function for distortion function $d(\cdot, \cdot)$ when the source $X_1$ has pmf $p(x_1)$. Furthermore, let $R_0(p(x_1), D)$ be the zero-error rate distortion function, which is the minimum (asymptotic) rate which can guaranty average distortion less than or equal to $D$ \emph{with probability one}, \emph{i.e., } with probability of excess distortion being zero. Then, it is easy to see that $R_0(p(x_1), D)$ at $D=0$ is equal to $\bar{C}_\mathcal{G}$ because $E[d(X_1^n,Y_2^n)]\leq 0$ implies that then $d(x_1^n,y_2^n)=0$ for all pairs $(x_1^n,y_2^n)$ where $p(x_1^n,y_2^n)>0$. 

Observe that $R_0(p(x_1), D)$ depends only on the support of $p(x_1)$, \emph{i.e.,} the values of $x_1$ where $p(x_1)>0$, and not on the exact values of probabilities $p(x_1)$. Furthermore, it is clear that $R(p(x_1), D)\leq R_0(p(x_1), D)$ since in $R_0(p(x_1), D)$ we ask for exactly zero probability of exceeding the distortion, whereas in $R(p(x_1), D)$ we ask for an asymptotically vanishing probability of excess distortion. Therefore, assuming that $p(x_1)>0$ for all $x_1$, we have
\begin{equation*}
R_0(p(x_1), D)\geq \max_{q(x_1)}R(q(x_1), D).
\end{equation*}
Interestingly, the above inequality holds with equality \cite[Thm 4.2]{d}:
\begin{equation}
R_0(p(x_1), D)=\max_{q(x_1)}R(q(x_1), D)=\max_{q(x_1)}~\min_{p(y_2|x_1): \mathbb E d(X_1,Y_2)\leq D} I(X_1;Y_2).\label{eqn:43}
\end{equation}

Now, let us specialize this result to our coordination problem, when distortion $D=0$. Let $$\mathcal{P}=\{p(y_2|x_1): p(y_2|x_1)>0 \textit{ only if }y_2\in A_{x_1}\}.$$ Then, the asymptotic coordination capacity is equal to \begin{equation}
\bar{C}(\mathcal G)=\max_{q(x_1)}\min_{p(y_2|x_1)\in \mathcal{P}}I(X_1;Y_2).
\end{equation} 

An alternative expression for $\bar{C}(\mathcal G)$ follows from \cite[Cor. 3.7]{d}, where it is shown that the standard rate-distortion function has the following characterization at zero distortion $D=0$:
\begin{equation}
R(p(x_1), 0)=-\min_{q(x_1)}\Big\{D(p(x_1)||q(x_1))+\max_{y_2\in \mathcal{Y}_2}\log\big(\sum_{x_1:d(x_1,y_2)=0}q(x_2)\big)\Big\}.
\end{equation}
Then from equation \eqref{eqn:43}, we have 
\begin{align}
\bar{C}(\mathcal G)&=R_0(p(x_1), 0)\nonumber
\\&=-\min_{q(x_1)}\max_{y_2\in \mathcal{Y}_2}\log\big(\sum_{x_1:d(x_1,y_2)=0}q(x_1)\big)\nonumber
\\&=\max_{q(x_1)}-\max_{y_2\in \mathcal{Y}_2}\log\big(\sum_{x_1:d(x_1,y_2)=0}q(x_1)\big)\label{eqn:re1}
\\&=\max_{q(x_1)}\min_{p(y_2|x_1)\in \mathcal{P}}I_0(X_1;Y_2),\nonumber
\end{align}
Where $I_0(X_1;Y_2)$ is the R\'enyi mutual information of order zero, defined in Section \ref{prelim}. The last equality holds because for any $q(x_1)$, from the definition of  $I_0(X_1;Y_2)$ we have that $$\min_{p(y_2|x_1)\in \mathcal{P}}I_0(X_1;Y_2)=-\log\big(\max_{y_2\in \mathcal{Y}_2}\sum_{x_1: d(x_1, y_2)=0}q(x_1)\big).$$
\begin{remark}
Curiously,  equation \eqref{eqn:re1} is also equal to the zero-error feedback capacity of a point-to-point channel, when it is positive. This expression is also related to the sphere-packing bound for error exponents (see for instance \cite[Eq. (8)]{Dalai}). It would be interesting to find operational interpretations for these facts.
\end{remark}

\subsubsection{Characterization in terms of R\'enyi mutual information}\label{sec:renyi}
So far, we have mentioned three characterizations for $\bar{C}(\mathcal G)$: one in terms of a relaxed linear program by Lov\'asz, and two as follows:
\begin{align}
\bar{C}(\mathcal G)&=\max_{q(x_1)}\min_{p(y_2|x_1)\in \mathcal{P}}I_0(X_1;Y_2), \label{eqn:mm1}
\\&=\max_{q(x_1)}\min_{p(y_2|x_1)\in \mathcal{P}}I(X_1;Y_2).  \label{eqn:mm2}
\end{align}
It is desirable to provide connections between various characterizations. Our first result states that the expression of Lov\'asz's linear program can be also understood in terms of R\'enyi mutual information, which to best of our knowledge is new. In particular, we use that Lov\'asz's linear program to show the following theorem:
\begin{theorem}Assuming $p(x_1)>0$ for all $x_1$, we have \label{thm:thm1}
\begin{align}
\bar{C}(\mathcal G)&=\max_{q(x_1)}\min_{p(y_2|x_1)\in \mathcal{P}}I_\infty(X_1;Y_2). \label{eqn:mm3}
\end{align}
\end{theorem}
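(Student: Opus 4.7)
The plan is to sandwich $\max_{q(x_1)}\min_{p(y_2|x_1)\in\mathcal{P}} I_\infty(X_1;Y_2)$ between $\bar C(\mathcal G)$ on both sides, using monotonicity of R\'enyi mutual information for one inequality and Lov\'asz's LP characterization $\bar C(\mathcal G)=\log\mathsf{LP}(\mathcal G)$ for the other.

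For the direction $\bar C(\mathcal G)\le \max_{q}\min_{p\in\mathcal P} I_\infty(X_1;Y_2)$, I would simply invoke monotonicity of $I_\alpha$ in $\alpha$ (cited in the preliminaries): since $I_0\le I_\infty$, taking $\min_{p\in\mathcal P}$ preserves the inequality, and then taking $\max_q$ preserves it as well. Combined with the characterization \eqref{eqn:mm1}, this yields the lower bound immediately.

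For the harder direction $\max_{q}\min_{p\in\mathcal P} I_\infty(X_1;Y_2)\le \bar C(\mathcal G)$, I would apply the max-min inequality to swap the order:
\[
\max_{q}\min_{p\in\mathcal P} I_\infty(X_1;Y_2)\;\le\; \min_{p\in\mathcal P}\max_{q} I_\infty(X_1;Y_2),
\]
then exhibit a single $p^*(y_2|x_1)\in\mathcal P$ whose value $\max_q I_\infty$ is at most $\log\mathsf{LP}(\mathcal G)$. Since we assume $p(x_1)>0$ for all $x_1$ and $\max_q I_\infty$ is maximized by any $q$ with full support, the inner supremum evaluates in closed form using \eqref{eqn:mutu0inf} to $\log\bigl(\sum_{y_2}\max_{x_1} p(y_2|x_1)\bigr)$. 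So the task reduces to constructing $p^*\in\mathcal P$ with $\sum_{y_2}\max_{x_1} p^*(y_2|x_1)\le \mathsf{LP}(\mathcal G)$.

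The construction uses the optimal LP solution $\alpha^*$ of \eqref{eqn:LP2}, for which $\sum_{y_2}\alpha^*(y_2)=\mathsf{LP}(\mathcal G)$ and $\sum_{y_2\sim x_1}\alpha^*(y_2)\ge 1$ for every $x_1$. Set
\[
p^*(y_2|x_1)\;=\;\frac{\alpha^*(y_2)\,\mathbf{1}[y_2\in A_{x_1}]}{\sum_{y_2'\in A_{x_1}}\alpha^*(y_2')}.
\]
This is well-defined (denominator $\ge 1$ by primal feasibility, so also strictly positive), it lies in $\mathcal P$ since its support on $\mathcal Y_2$ is contained in $A_{x_1}$, and crucially $p^*(y_2|x_1)\le \alpha^*(y_2)$ for every $x_1,y_2$. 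Hence $\max_{x_1} p^*(y_2|x_1)\le \alpha^*(y_2)$ and summing over $y_2$ gives $\sum_{y_2}\max_{x_1}p^*(y_2|x_1)\le \mathsf{LP}(\mathcal G)$, as required. The main ``trick'' to watch for is this normalization step — everything else is either bookkeeping or a direct appeal to already-stated results (the $I_\infty$ formula \eqref{eqn:mutu0inf}, monotonicity of $I_\alpha$, and Lov\'asz's identity $\bar C(\mathcal G)=\log\mathsf{LP}(\mathcal G)$).
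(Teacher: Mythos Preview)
Your proof is correct and follows essentially the same route as the paper's first proof: both reduce the claim to Lov\'asz's identity $\bar C(\mathcal G)=\log\mathsf{LP}(\mathcal G)$ by relating $\min_{p\in\mathcal P}\sum_{y_2}\max_{x_1}p(y_2|x_1)$ to the LP \eqref{eqn:LP2}. The differences are cosmetic. For the direction $\bar C\le\max_q\min_p I_\infty$, the paper argues directly that the two LPs coincide, whereas you shortcut by invoking the already established $I_0$ characterization \eqref{eqn:mm1} and monotonicity of $I_\alpha$; both are fine. For the other direction, the paper writes down the auxiliary program $\mathsf{LP}_2$ in \eqref{eqn:LP3n} and appeals to Fourier--Motzkin elimination to identify it with $\mathsf{LP}$, while you instead exhibit the explicit witness $p^*(y_2|x_1)=\alpha^*(y_2)\mathbf 1[y_2\in A_{x_1}]/\sum_{y_2'\in A_{x_1}}\alpha^*(y_2')$. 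Your construction is exactly what the elimination would produce, so the arguments are equivalent; your version has the advantage of being fully explicit and avoids the somewhat opaque ``by Fourier--Motzkin'' step. (A minor remark: the max--min swap you invoke is in fact an equality here, since the inner expression does not depend on $q$ once $q$ has full support---this is what the paper uses---but the inequality is of course enough for your purposes.)
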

\begin{remark} Since  mutual information of order $\alpha$ is a non-decreasing function in $\alpha$, equations \eqref{eqn:mm1}-\eqref{eqn:mm3} imply that for any $\alpha\in[0,\infty]$ we have
\begin{align}
\bar{C}(\mathcal G)&=\max_{q(x_1)}\min_{p(y_2|x_1)\in \mathcal{P}}I_\alpha(X_1;Y_2). \label{eqn:mm4}
\end{align}
\end{remark}
\begin{proof}[Proof of Theorem \ref{thm:thm1}]
Using equation \eqref{eqn:mutu0inf}, we need to show that
\begin{align*}\bar{C}(\mathcal G)&= \max_{q(x_1)}\min_{p(y_2|x_1)\in \mathcal{P}}\log\left(\sum_{y_2}\underset{x_1: q(x_1)>0}{\max}\;p(y_2|x_1)\right)
\\&=\min_{p(y_2|x_1)\in \mathcal{P}}\log\left(\sum_{y_2}\underset{x_1}{\max}\;p(y_2|x_1)\right).\end{align*}
Let $\alpha({y_2})=\underset{x_1}{\max}\;p(y_2|x_1)$. Then, we would like to show that 
\begin{align*}\bar{C}(\mathcal G)= \log\min_{p(y_2|x_1)\in \mathcal{P}}\sum_{y_2}\alpha({y_2}).\end{align*}
Thus, $\bar{C}(\mathcal G)=\log \mathsf{LP}_2(\mathcal G)$ where
\begin{align}
\begin{split}
\mathsf{LP}_2(\mathcal G)&= \min \sum_{y_2} \alpha(y_2)  \\   
 s.t. \hspace{3 mm} \alpha(y_2)&\geq p(y_2|x_1),\qquad\forall x_1
\\\hspace{3 mm}\sum_{y_2} p(y_2|x_1)&=1, \qquad p(y_2|x_1)\geq 0
\\\hspace{3 mm}p(y_2|&x_1)=0 \text{ if } y_2\nsim x_1,
\end{split}\label{eqn:LP3n}
\end{align} 
where by $y_2\nsim x_1$ we mean that $y_2\notin A_{x_1}$.
Now, recall Lov\'asz's linear programming formulation: $\bar{C}(\mathcal G)=\log \mathsf{LP}(\mathcal G)$ where 
\begin{align}
\begin{split}
\mathsf{LP}(\mathcal G)&= \min \sum_{y_2} \alpha(y_2)  \\   
 s.t. \hspace{3 mm} \forall x_1&\in \mathcal{X}_1 \hspace{2 mm} \sum_{y_2: y_2\sim x_1} \alpha(y_2) \geq 1 \hspace{3 mm} \\  &\alpha(y_2)\geq 0.
\end{split}\label{eqn:LP2n}
\end{align} 
Thus, we only need to show that the two LPs are equivalent. The LP given in equation \eqref{eqn:LP3n} has more variables than the one given in equation \eqref{eqn:LP2n}. Using a standard Fourier-Motzkin elimination on variables $p(y_2|x_1)$, one can see that the former LP reduces to the latter LP. Hence we are done.

\end{proof}

Our aim was to show the connection between Lov\'asz's linear program and the R\'enyi mutual information. But it is possible to provide an algebraic proof for  Theorem \ref{thm:thm1} (see \cite{Max} for another algebraic proof).

\begin{proof}[Second proof of Theorem \ref{thm:thm1}] Let $$F_\alpha(q(x_1))= \min_{p(y_2|x_1)\in \mathcal{P}}I_\alpha(X_1;Y_2).$$ Since R\'enyi mutual information $I_\alpha(X;Y)$ is a non-decreasing function of $\alpha$, we have that $$F_1(q(x_1))\leq F_\infty(q(x_1))\qquad \forall q(x_1).$$
Furthermore, using the fact that $p(x_1)>0$ for all $x_1$ we have that
$$F_\infty(q(x_1))= \log\left(\sum_{y_2}\underset{x_1}{\max}\;p(y_2|x_1)\right)$$
does not depend on $q(x_1)$. Therefore, it suffices to show that at the $q(x_1)$ that maximizes $F_1(q(x_1))$, we have that $F_1(q(x_1))\geq F_\infty(q(x_1))$. 

A theorem by Shannon \cite[Thm. 2]{ze} studies the pmf $q(x_1)$ that maximize $F_1(q(x_1))$. This theorem implies that the joint pmf that obtains $\max_{q(x_1)}F_1(q(x_1))$ satisfies $q(x,y)=c\cdot q(x)q(y)$ for some fixed $c$ when $q(x,y)>0$. This implies that $I(X;Y)=I_{\infty}(X;Y)=\log(c)$. Hence,
$$ F_\infty(q(x_1))= \min_{p(y_2|x_1)\in \mathcal{P}}I_\infty(X_1;Y_2)\leq \log(c)=\max_{q(x_1)}F_1(q(x_1)).$$
\end{proof}

\subsection{Connections with strong coordination}\label{sec:strong-coordination}
Take some arbitrary $p(y_2|x_1)$. Then, consider the following channel simulation problem: Alice observes a sequence $x_1^n$, unknown to Bob. The goal of Bob is to sample a sequence $y_2^n$ from the pmf $\prod_{i=1}^np(y_{2i}|x_{1i})$. To achieve this, Alice can send $nR$ noiseless bits to Bob, \emph{i.e.,} the communication rate is $R$. This is the problem of  simulating the memoryless channel $p(y_2|x_1)$ via a noiseless link of limited rate using the strong coordination criterion. Alice and Bob may share common randomness, independent of $x_1^n$, at some limited rate $R_0$. 

The channel simulation problem implies set coordination if  $p(y_2|x_1)$ is such that $p(y_2|x_1)>0$ only if $y_2\in {A}_{x_1}$. As discussed in the introduction, strong coordination with infinite shared randomness $R_0=\infty$ is related to the empirical coordination (at least in the vanishing error formulation).  It turns out that the connection between strong and empirical coordiantion exists here in the zero error case as well. Observe that  Empirical coordination, itself, is related to set coordination.

The above zero-error strong coordination problem has been studied in \cite{Winter}, where it is shown that the minimum rate $R$, when $R_0=\infty$, is equal to  \cite[Theorem 24]{Winter}: $$\log(\sum_{y_2} \max_{x_1} p(y_2|x_1)).$$
Even though not mentioned by the authors of \cite{Winter}, the above relation is nothing but $I_\infty(X_1;Y_2)$ for a $p(x_1)$ where $p(x_1)>0, \forall x_1$. Also in Section \cite[III.G]{Winter} (the section on \emph{Weak simulation and reversibility}), formulas similar to the ones given in the first proof of Theorem \ref{thm:thm1}  (both of the linear programs) appear, even though Theorem \ref{thm:thm1} is considering a different problem. Although, the linear program of equation \eqref{eqn:LP2} appears but the works of McEliece and Posner, or Lov\'asz are not cited.

\section{Two nodes with side information}\label{sec:with-side}

Assume that node two has some inputs $X_2$, but that node one is aware of that, \emph{i.e., }$H(X_2|X_1)=0$. 
As in the previous section, we can ignore $Y_1$ in our analysis. Let us define $$A_{x_1,x_2}\triangleq \{y_2:\exists y_1\in \mathcal{Y}_1, (y_1,y_2)\in A_{x_1x_2}\}.$$  If node two chooses a proper action $y_2$ from $A_{x_1,x_2}$, by definition there will exist a proper action $y_1$ for node one.  Since node one knows $x_2$ and the message sent to node two as well as the decoding strategy of node two, action $y_2$ and hence $y_1$ can be found by this node. 

Without loss of generality we assume that $p(x_2)>0$ for all $x_2$, throughout this section.

\begin{definition}Take some arbitrary $x^*_2\in\mathcal{X}_2$ and consider the coordination problem where $X_2=x_2^*$ is fixed and known to everybody, \emph{i.e.,} instead of the joint pmf $p(x_1, x_2)$ we consider the joint pmf $q(x_1, x_2)=p(x_1|x_2^*)\textbf{1}[x_2=x_2^*]$. Since $X_2$ is assumed to be fixed, this problem falls into the class of problems considered in Section \ref{sec:3}, and using Definition \ref{coordinationgraph}, a bipartite graph can be associated to it. We denote this graph by $\mathcal G_{x_2^*}$, which defined on $(\tilde{\mathcal{X}_1}, \mathcal{Y}_2)$, where $\tilde{\mathcal{X}_1}=\{x_1: p(x_1|x_2^*)>0\}$.
\end{definition}

\begin{theorem}\label{thm:side-inf}
The one-shot coordination capacity for joint pmf $p(x_1, x_2)$ where $H(X_2|X_1)=0$ is equal to 
$$\max_{x_2}C(\mathcal G_{x_2}).$$
The asymptotic coordination capacity for joint pmf $p(x_1, x_2)$ where $H(X_2|X_1)=0$ is equal to 
$$\max_{x_2}\bar C(\mathcal G_{x_2}).$$
\end{theorem}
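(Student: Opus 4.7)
The proof reduces both directions to the no-side-information problem of Section \ref{sec:3}. The converse conditions on a constant pattern of $X_2$, and the achievability runs a separate no-side-info code per value of $X_2$.

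\textbf{Converse.} Fix any $x_2^\star\in\mathcal{X}_2$ with $p(x_2^\star)>0$. The event $\{X_2^n=(x_2^\star,\ldots,x_2^\star)\}$ has probability $p(x_2^\star)^n>0$; conditional on it, $X_1^n$ is i.i.d.~$p(x_1|x_2^\star)$ and the coordination requirement becomes $y_{2i}\in A_{x_{1i},x_2^\star}$ for every $i$. Restricting any block-$n$ rate-$R$ code to this event yields a valid no-side-information coordination code for the bipartite graph $\mathcal{G}_{x_2^\star}$, so $R\ge \bar{C}(\mathcal{G}_{x_2^\star})$ by Section \ref{sec:3}. Maximizing over $x_2^\star$ gives the lower bound on $\bar C$, and the identical argument at $n=1$ gives $R\ge \max_{x_2} C(\mathcal{G}_{x_2})$ for the one-shot case.

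\textbf{Achievability.} One-shot: for each $x_2$ fix a minimum cover $\mathcal{S}_{x_2}\subseteq\mathcal{Y}_2$ of $\mathcal{G}_{x_2}$ of size $\mathsf{IP}(\mathcal{G}_{x_2})$. Since $H(X_2|X_1)=0$, node one computes $x_2=f(x_1)$ and transmits, in a common alphabet of size $\max_{x_2}\mathsf{IP}(\mathcal{G}_{x_2})$, the index of some $y_2\in\mathcal{S}_{x_2}\cap A_{x_1,x_2}$; node two uses $x_2$ to pick the right codebook and decode. Asymptotic: fix $\epsilon>0$ and, using finiteness of $\mathcal{X}_2$ together with $\log\mathsf{IP}(\mathcal{G}_{x_2}^{\otimes L})/L\to\bar{C}(\mathcal{G}_{x_2})$, pick a single $L$ such that every $\mathcal{G}_{x_2}$ admits a length-$L$ cover code of rate $\le\bar{C}(\mathcal{G}_{x_2})+\epsilon$. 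Given $x_1^n$ and hence $x_2^n$, node one groups positions by $x_2$-value (with counts $n_{x_2}$), runs $\lceil n_{x_2}/L\rceil$ batches of the corresponding length-$L$ code in each group, and concatenates the resulting bit strings; the total length is at most
\begin{equation*}
\sum_{x_2}(n_{x_2}+L)\bigl(\bar{C}(\mathcal{G}_{x_2})+\epsilon\bigr)\ \le\ n\Bigl(\max_{x_2}\bar{C}(\mathcal{G}_{x_2})+\epsilon\Bigr)+O(L),
\end{equation*}
so padding to a fixed rate $R=\max_{x_2}\bar{C}(\mathcal{G}_{x_2})+2\epsilon$ is admissible for all sufficiently large $n$. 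Because node two knows $x_2^n$, it knows the per-batch lengths and can parse the message unambiguously and decode batch by batch.

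\textbf{Main obstacle.} The only non-routine step is the uniform-$L$ choice in the asymptotic achievability: one needs a single block length that simultaneously approximates $\bar{C}(\mathcal{G}_{x_2})$ for \emph{every} $x_2\in\mathcal{X}_2$. Without it, the naive per-sample scheme (apply the one-shot cover code at each position independently) yields only the weaker rate $\max_{x_2}C(\mathcal{G}_{x_2})$. Finiteness of $\mathcal{X}_2$ together with the Fekete-type convergence $\log\mathsf{IP}(\mathcal{G}_{x_2}^{\otimes L})/L\to\bar{C}(\mathcal{G}_{x_2})$ makes the uniform choice possible and closes the gap between achievability and the converse.
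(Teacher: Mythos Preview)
Your proof is correct and uses the same essential ingredients as the paper's (type/value decomposition by $x_2$, the bound $\mathsf{IP}\ge\mathsf{LP}$, multiplicativity of $\mathsf{LP}$, and the uniform Fekete convergence $\frac{1}{L}\log\mathsf{IP}(\mathcal G_{x_2}^{\otimes L})\to\bar C(\mathcal G_{x_2})$), but the packaging differs. The paper first applies the one-shot result at block length $n$ to obtain the exact intermediate formula $\frac{1}{n}\max_{x_2^n}C(\mathcal G_{x_2^n})$ with $\mathcal G_{x_2^n}=\bigotimes_i\mathcal G_{x_{2i}}$, and then computes this limit purely combinatorially: the $\ge$ direction uses $\mathsf{IP}\ge\mathsf{LP}$ and $\mathsf{LP}(\mathcal G_1\otimes\mathcal G_2)=\mathsf{LP}(\mathcal G_1)\mathsf{LP}(\mathcal G_2)$ over \emph{all} $x_2^n$, while the $\le$ direction bounds $\mathsf{IP}(\mathcal G_{x_2^n})\le\prod_i\mathsf{IP}(\mathcal G_i^{\otimes n_i})$ and invokes the same uniform-threshold argument you identified. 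Your version is more operational: your converse specializes directly to constant sequences $x_2^n=(x_2^\star,\ldots,x_2^\star)$ (which already attains the maximum, so nothing is lost), and your achievability builds an explicit variable-composition code rather than bounding graph invariants. The paper's route has the advantage of yielding the exact finite-$n$ expression along the way; yours is arguably more transparent about why the scheme works and avoids manipulating tensor-product cover numbers directly.
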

\begin{remark}
This result implies a cut-set bound  for general coordination networks. If we divide the set of nodes into two groups, say, nodes $\{1,2,..., k\}$ in one group and nodes $\{k+1, k+2, ..., m\}$ in the second group, we can write a cut-set bound as follows: we assume two super-nodes that have access to $x_{[k]}$  and $x_{[k+1:m]}$, and need to make actions $y_{[k]}$  and $y_{[k+1:m]}$ respectively. We also assume that a genie provides the inputs of the second group $x_{[k+1:m]}$ to the first super-node. Then, the minimum total communication rate from super-node one to super-node two is bounded from below by the corresponding bound given in the above theorem for such a two node scenario.
\end{remark}

\begin{proof}[Proof of Theorem \ref{thm:side-inf}] The equation for one-shot case follows directly from the definition of $\mathcal G_{x_2}$, the fact that coordination needs to hold for all values of $x_2$, and $x_2$ is known by both the nodes. 

It remains to show the result for the asymptotic case. Using the one-shot result and applying it to the $n$-letter version of the problem, the minimum coordination rate for codebooks of length $n$, will be equal to 
$$\frac{1}{n}\max_{x^n_2}C(\mathcal G_{x^n_2}),$$
where $\mathcal G_{x^n_2}=\mathcal G_{x_{21}}\otimes\mathcal G_{x_{22}}\otimes...\otimes\mathcal G_{x_{2n}}$ is the tensor product of the graphs for indices $1$ to $n$. Therefore, the one-way communication capacity is equal to
\begin{equation}
\lim_{n\to \infty} \frac{1}{n}\max_{x_2^n} C(\mathcal G_{x_2^n}).
\end{equation}
Thus, we need to show that 
$$\lim_{n\to \infty} \frac{1}{n}\max_{x_2^n} C(\mathcal G_{x_2^n})=\max_{x_2}\bar C(\mathcal G_{x_2}).$$
Using the fact that $\log C(\mathcal G)=\mathsf{IP}(\mathcal G)\geq \mathsf{LP}(\mathcal G)=\log\bar C(\mathcal G)$ and $\mathsf{LP}(\mathcal G_1\otimes \mathcal G_2)=\mathsf{LP}(\mathcal G_1)\mathsf{LP}(\mathcal G_2)$, we have
\begin{align}
\begin{split}
 \frac{1}{n}\max_{x_2^n} C(\mathcal G_{x_2^n}) &\geq \log\max_{x_2^n} \mathsf{LP}(\mathcal G_{x_2^n})^{1/n}\\
&= \log\max_{x_2^n}  \prod_{i=1}^{n} \mathsf{LP}(\mathcal G_{x_{2i}})^{1/n}\\
&= \log \max_{x_2}\mathsf{LP}(\mathcal G_{x_{2}})\\
&=\max_{x_2}\log\mathsf{LP}(\mathcal G_{x_{2}})\\
&=\max_{x_2}\bar{C}(\mathcal G_{x_{2}}).
\end{split}
\end{align}

For the reverse direction we prove that for a given $\epsilon>0$, there exist a natural number $N$ such that for all $n>N$, we have
\begin{equation}
\mathsf{IP}(\mathcal G_{x_2^n})\leq \mathsf{LP}(\mathcal G_{x_2^n})(1+\epsilon)^n, \qquad \forall x_2^n.\label{eqn23}
\end{equation}
This equation would then imply that
\begin{align}
\begin{split}
 \frac{1}{n}\max_{x_2^n} C(\mathcal G_{x_2^n}) &= \frac{1}{n}\max_{x_2^n} \log \mathsf{IP}(\mathcal G_{x_2^n})\\
 &\leq \frac{1}{n}\max_{x_2^n} \log \mathsf{LP}(\mathcal G_{x_2^n})+\log (1+\epsilon)\\
&= \max_{x_2}\bar{C}(\mathcal G_{x_{2}})+\log (1+\epsilon).
\end{split}
\end{align}
 It remains to show equation \eqref{eqn23}. Without loss of generality assume that $\mathcal{X}_2=\{1,2,..., r\}$ for some natural number $r$. Take an arbitrary sequence $x_2^n$. Let $n_i$ be the number of indices $i$ such that $x_{2i}=i$ for $i\in[r]$. Then $\sum_i n_i=n$ and the tuple $(n_1/n, n_2/n, ..., n_r/n)$ indicates the type of the sequence $x_2^n$. Because for any two graphs $\mathcal{G}_1$ and $\mathcal{G}_2$ we have $$\mathsf{IP}(\mathcal G_{1}\otimes\mathcal G_{2})=\mathsf{IP}(\mathcal G_{2}\otimes\mathcal G_{1}),$$
we get that 
$\mathsf{IP}(\mathcal G_{x_2^n})=\mathsf{IP}(\Motimes_{i=1}^r \mathcal G_{i}^{\otimes n_i})$.
From equations \eqref{prodID} and  \eqref{prodID2}, we have that
\begin{align}
\mathsf{IP}(\mathcal G_{x_2^n})&\leq \prod_{i=1}^r \mathsf{IP}(\mathcal G_{i}^{\otimes n_i}),\\
\mathsf{LP}(\mathcal G_{x_2^n})&=\prod_{i=1}^r \mathsf{LP}(\mathcal G_{i}^{\otimes n_i}).
\end{align}
Taking logarithms from both sides and dividing by $n$, we get
\begin{align}
\frac{1}{n}\log\mathsf{IP}(\mathcal G_{x_2^n})&\leq \sum_{i=1}^r \frac{n_i}{n}\log (\mathsf{IP}(\mathcal G_{i}^{\otimes n_i})^{1/n_i}),\label{eqn:abv-1}\\
\frac{1}{n}\log\mathsf{LP}(\mathcal G_{x_2^n})&=\sum_{i=1}^r \frac{n_i}{n}\log (\mathsf{LP}(\mathcal G_{i}^{\otimes n_i})^{1/n_i})\nonumber
\\&= \sum_{i=1}^r \frac{n_i}{n}\log (\mathsf{LP}(\mathcal G_{i})),  \label{eqn:abv}
\end{align}
Where in equation \eqref{eqn:abv} we used equation \eqref{prodID2}.	

From Lov\'asz's result in equation \eqref{LovaszThem}, we know that 
$$\lim_{n_i\to \infty} \mathsf{IP}(\mathcal G^{\otimes n_i}_{i})^{1/n_i}=\mathsf{LP}(\mathcal G_{i}).$$
Thus, given any $\delta>0$, for each $i\in[r]$, if $n_i>N_i$ for some sufficiently large threshold $N_i$, we have that 
$\log (\mathsf{IP}(\mathcal G_{i}^{\otimes n_i})^{1/n_i})$ is within $\delta$ interval of $\log (\mathsf{LP}(\mathcal G_{i}))$.
Compare equations \eqref{eqn:abv-1} and \eqref{eqn:abv}. If $n_i>N_i$, we have that the corresponding terms in the sum are within $\delta$ interval of each other. If $n_i<N_i$, by letting $n$ go to infinity, we can make $n_i/n\leq N_i/n$ as small as we want. This would complete the proof.
\end{proof}

\begin{remark}
We  show that the general lower bound given in Section \ref{sec:general-lower} is tight in this case. Consider that $X_2$ is a function of $X_1$. Then, similar to Corollary \ref{cor}, when $X_1\rightarrow F\rightarrow Y_2$ and $X_2$ is a function of $X_1$, we have 
\begin{equation}
I(F;X_1|X_2)=I(FY_2;X_1|X_2)\geq I(Y_2;X_1|X_2).
\end{equation}
Therefore, we know that
\begin{equation}
\max_{q(x_1,x_2)} \min_{p(y_2|x_1,x_2)\in \mathcal{P}} I(Y_2;X_1|X_2)
\end{equation}
is a lower bound for the coordination capacity, where the maximum is over all $q(x_1, x_2)$ such that $q(x_1, x_2)=0$ if $p(x_1, x_2)=0$. Observe that
\begin{align}
\max_{q(x_1,x_2)} \min_{p(y_2|x_1,x_2)\in \mathcal{P}} I(Y_2;X_1|X_2)&\geq \max_{x_2} \max_{q(x_1|x_2)} \min_{p(y_2|x_1,x_2)\in \mathcal{P}} I(Y_2;X_1|X_2=x_2)
\\&= \max_{x_2} \bar C(\mathcal G_{x_2}),
\end{align}
where the maximum in the second equation is over $q(x_1|x_2)$ where $q(x_1|x_2)>0$ only if $p(x_1|x_2)>0$. Thus, $ \max_{x_2} \bar C(\mathcal G_{x_2})$ is a tight lower bound to the  coordination capacity.
\end{remark}

\section{Linear Coordination} \label{sec:linear}
In this section we will study a special case of set coordination problems, where actions and inputs are constrained by linear equations. More specifically in a network with $m$ nodes, we assume that the inputs $X_i$ and output $Y_i$ are all vectors (of possibly different lengths) in a given field $\mathbb{F}$. We say that the nodes are coordinated if 
\begin{equation}
K_1X_1+K_2X_2+\cdots+K_mX_m+K_{m+1}Y_1+\cdots+K_{2m}Y_m=0,
\end{equation}
for some matrices $K_i$, $i\in[2m]$.

One motivation for this model comes from linear control systems. Suppose that nodes are controllers and inputs are disturbances to the system. Controllers should undo the disturbance by producing proper actions. 

\begin{example}Linear coordination is a generalization of linear network coding. To see this, for instance consider a network of $m$ nodes, where  node one has message of $k$ bits for node $m-1$, and node two has message of length $\ell$ bits for node $m$.
Then, we can write this as a linear coordination problem consider that
\begin{equation}
K_1=-K_{2m-1}=\begin{bmatrix}
I_{k\times k}  & 0\\
0 & 0
\end{bmatrix}_{(\ell+k)\times (\ell+k)}, \hspace{4 mm}
K_2=-K_{2m}=\begin{bmatrix}
0 & 0\\
0 & I_{\ell\times \ell}
\end{bmatrix}_{(\ell+k)\times (\ell+k)},
\end{equation}
While other $K_i$'s are zero. First $k$ bits of $X_1$ are the message for node $m-1$, and other $\ell$ bits are zero. In first $X_2$ first $k$ bits are zero and rest of them are the message for node $m$.
\end{example}

\subsection{Linear coordination capacity}
For a linearly constrained coordination problem, we can define linear or non-linear codes.  In a linear code, all encoding and decoding operations  are linear, \emph{i.e.,} the transmitted messages are constructed linearly from the inputs, and the output actions are reconstructed linearly from the messages and the inputs. On the other hand, a non-linear code allows for non-linear encoder and decoders. 

For simplicity, we only consider the two nodes network with one-way communication of Section \ref{sec:one-waycoordination}. We assume that the inputs of nodes are column vectors $X_1\in \mathbb{F}^{r_1}, X_2 \in \mathbb{F}^{r_2}$, distributed according to some joint distribution. The outputs of the two nodes are also assumed to be column vectors $Y_1\in \mathbb{F}^{s_1}, Y_2 \in \mathbb{F}^{s_2}$. The nodes are coordinated if 
\begin{equation}
K_1X_1+K_2X_2+K_3Y_1+K_4Y_2=0,\label{eqn:Ks}
\end{equation}
for some fixed matrices $K_1,K_2,K_3$ and $K_4$. These four matrices are assumed to have the same number of $c$ rows. The number of columns of $K_1, K_2, K_3$ and $K_4$ are $r_1, r_2, s_1$ and $s_2$ respectively. Similar to Section \ref{sec:3}, we further make the simplifying assumption that node two has no input $(r_2=0)$, or equivalently $K_2=0$.

The one-shot (non-linear) coordination capacity of the communication link from node one to node two, $C(K_1, K_3, K_4)$, is defined as before for the permissible action sets $$A_{x_1}=\{(y_1, y_2): K_1x_1+K_3y_1+K_4y_2=0\}, \qquad x_1\in\mathcal{X}_1.$$
On the other hand, the one-shot linear coordination capacity $C_L(K_1, K_3, K_4)$ is defined as follows:
\begin{definition}[One-shot linear coordination capacity]
 The message $M$ generated from node one is assumed to be a linear combination of coordinates of $X_1$, \emph{i.e.,} $M=SX_1$ for some matrix $S$ in $\mathbb{F}^{t\times r_1}$, where $t$ is the number of symbols that are transmitted. Actions $Y_1$ and $Y_2$ are constructed linearly according to $Y_1=AX_1$ and $Y_2=BM$. The goal is to find matrices $S$, $A$ and $B$ such that 
equation \eqref{eqn:Ks} holds, while $t$ (the number of rows of matrix $S$) is minimized. The minimum value of $t$ is called the one-shot linear coordination capacity and denoted by $C_{{L}}(K_1, K_3, K_4)$.
\end{definition}

Because linear codes are a special case of non-linear codes, we have that  $C\leq C_L\log|\mathbb{F}|$.

In an asymptotic coordination problem, we have multiple inputs and outputs $X_{1i}, Y_{1i}, Y_{2i}$ for $i\in[n]$. We say that the nodes are coordinated if
\begin{equation}
K_1X_{1i}+K_3Y_{1i}+K_4Y_{2i}=0,\qquad \forall i\in[n]\label{coordin;v}.
\end{equation}
Alternatively, if we make an overall vector of inputs $\mathbf{X}_1$ by concatenating column vectors $X_{11}, X_{12}, ..., X_{1n}$, and similarly for $\mathbf{Y}_1$ and $\mathbf{Y}_2$, we can express equation \eqref{coordin;v} as follows:
\begin{equation}
(I_{n\times n}\otimes K_1)\mathbf{X}_{1}+(I_{n\times n}\otimes K_3)\mathbf{Y}_{1}+(I_{n\times n}\otimes K_4)\mathbf{Y}_{2}=0,\qquad \forall i\in[n]\label{coordin;v1}.
\end{equation}
Therefore, we can make the following definitions:

\begin{definition}[Asymptotic linear coordination capacity]
 The asymptotic linear coordination capacity is defined as 
$$\bar C_{{L}}(K_1, K_3, K_4)=\lim_{n\rightarrow \infty}\frac 1n C_{{L}}(I_{n\times n}\otimes K_1, I_{n\times n}\otimes K_3, I_{n\times n}\otimes K_4).$$
Similarly, the  asymptotic non-linear coordination capacity can be expressed as
$$\bar C(K_1, K_3, K_4)=\lim_{n\rightarrow \infty}\frac 1n C(I_{n\times n}\otimes K_1, I_{n\times n}\otimes K_3, I_{n\times n}\otimes K_4).$$
\end{definition}

\subsection{Linear one-way coordination capacity}
In this section we study one-shot and asymptotic linear capacity for the one-way communication setup. 

\begin{theorem}\label{LTh} Let subspace $\mathscr{V}$ be the linear span of all vectors $x_1\in \mathbb{F}^{r_1}$ such that $p(x_1)>0$. Then we have 
$C_L(K_1, K_3, K_4)=\bar C_L(K_1, K_3, K_4)=f(K_1,K_3,K_4)$ where 
\begin{align}
\begin{split}
f(K_1,K_3,K_4)&=\min_{\mathscr{U}} \dim \mathscr{U}\\
\text{s.t.\hspace{3 mm}} K_1\mathscr{V}&\subseteq  \mathfrak{Im}(K_3)\oplus \mathscr{U}\\
\mathscr{U}&\subseteq  \mathfrak{Im}(K_4).
\end{split}\label{eqn:Tmh4}
\end{align}
Here the minimum is over linear subspaces $\mathscr{U}$ that satisfy the given constraints, $\oplus$ is the Minkowski sum and $\mathfrak{Im}(\cdot)$ is the image operator.
\end{theorem}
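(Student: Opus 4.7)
The plan is to exploit the linearity of the code aggressively: since the encoder, decoders, and the coordination constraint are all linear, every object in sight is a subspace and the claim reduces to a pure linear-algebra identity. I would first establish the one-shot equality $C_L=f$, and then bootstrap to $\bar C_L=f$ via additivity of $f$ under Kronecker products. For the converse $C_L\geq f$, I would take any valid one-shot linear code $(S,A,B)$ with $S\in\mathbb F^{t\times r_1}$ and set $\mathscr{U}:=K_4BS(\mathscr{V})$; then $\mathscr{U}\subseteq\mathfrak{Im}(K_4)$, $\dim\mathscr{U}\leq t$, and the identity $K_1v=-K_3Av-K_4BSv$ for $v\in\mathscr{V}$ places $K_1\mathscr{V}$ inside $\mathfrak{Im}(K_3)+\mathscr{U}$, so $\mathscr{U}$ is feasible for the program defining $f$ with value at most $t$.

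For achievability $C_L\leq f$, starting from an optimal $\mathscr{U}$ of dimension $f$, I would first pick a basis $u_1,\ldots,u_f$ of $\mathscr{U}$ together with preimages $b_j\in\mathbb F^{s_2}$ satisfying $K_4b_j=u_j$, thereby defining $B$. The key step is then to factor $K_1|_{\mathscr{V}}=K_3\hat A+\hat U$ as a sum of linear maps $\hat A:\mathscr{V}\to\mathbb F^{s_1}$ and $\hat U:\mathscr{V}\to\mathscr{U}$. I expect this to be the main obstacle, because although each value $K_1v$ admits such a decomposition by feasibility, there need not be a canonical one and the possibly nontrivial intersection $\mathfrak{Im}(K_3)\cap\mathscr{U}$ prevents a naive pointwise choice from being linear in $v$. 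I would handle this by passing to the quotient $\pi:\mathbb F^c\twoheadrightarrow\mathbb F^c/\mathfrak{Im}(K_3)$: feasibility of $\mathscr{U}$ is precisely $\pi(K_1\mathscr{V})\subseteq\pi(\mathscr{U})$, so $\hat U$ may be defined as $\pi\circ K_1|_{\mathscr{V}}$ composed with any linear section of $\pi|_{\mathscr{U}}$, and $\hat A$ is then recovered via a linear section of $K_3$. Reading off the coordinates of $\hat U$ in the basis $\{u_j\}$ defines $S|_{\mathscr{V}}$, which I would extend linearly (together with $A$, after absorbing signs) to all of $\mathbb F^{r_1}$; coordination then holds on $\mathscr{V}$ and hence on the support of $X_1$.

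For the asymptotic statement I would apply the one-shot equality to the $n$-letter problem. Since the source is i.i.d., the support subspace tensorizes as $\mathscr{V}^n$, so the one-shot capacity of the $n$-letter problem equals $f(I_n\otimes K_1,I_n\otimes K_3,I_n\otimes K_4)$, and it remains to verify additivity $f(I_n\otimes K_1,I_n\otimes K_3,I_n\otimes K_4)=nf(K_1,K_3,K_4)$. The upper bound is immediate from the product construction $\mathscr{U}^{(n)}=\mathscr{U}^n$. For the lower bound I would reuse the quotient trick to identify $f(K_1,K_3,K_4)$ with $\dim\pi(K_1\mathscr{V})$, a quantity that tensorizes immediately under the natural isomorphism $(\mathbb F^c)^n/\mathfrak{Im}(K_3)^n\cong(\mathbb F^c/\mathfrak{Im}(K_3))^n$, yielding the desired equality and hence $\bar C_L=f$.
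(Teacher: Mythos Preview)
Your one-shot argument ($C_L=f$) is correct and close in spirit to the paper's; your quotient-space construction of the linear splitting $K_1|_{\mathscr V}=K_3\hat A+\hat U$ via a section of $\pi|_{\mathscr U}$ is cleaner than the paper's basis-by-basis construction, and the closed form $f=\dim\pi(K_1\mathscr V)$ it yields is a genuine simplification that the paper does not isolate. For $\bar C_L=C_L$ you take a genuinely different route: the paper block-partitions an arbitrary two-letter code $(S,A,B)$, uses independence of $X_{11}$ and $X_{12}$ to force the off-diagonal blocks of $BS$ to vanish, and then reads off a one-letter code of at most half the rate; you instead argue additivity of the functional $f$ under Kronecker products via your closed form. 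Your approach is more conceptual and avoids the somewhat delicate block manipulations.

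The gap is the sentence ``since the source is i.i.d., the support subspace tensorizes as $\mathscr V^n$.'' The $n$-letter support subspace is $\mathrm{span}\bigl((\mathrm{supp}\,X_1)^n\bigr)$, which in general is strictly smaller than $(\mathrm{span}\,\mathrm{supp}\,X_1)^n=\mathscr V^n$. For instance, if $\mathrm{supp}\,X_1=\{s\}$ with $s\neq 0$ then the $n$-letter span is the diagonal line, and one checks directly that $f^{(n)}<nf$, so your additivity step fails as written. (The paper's independence step has the same blind spot: from $\alpha X_{11}+\beta X_{12}=0$ a.s.\ with both $X_{1i}$ constant and nonzero one cannot conclude $\alpha X_{11}=0$.) The equality $\mathrm{span}(S^n)=\mathscr V^n$ does hold whenever $0$ lies in the affine hull of $\mathrm{supp}\,X_1$---in particular if $p(0)>0$ or if the support is all of $\mathbb F^{r_1}$---so you should either state that hypothesis or handle the affine defect separately (e.g., by translating the support so that it contains $0$ and absorbing the resulting affine offset into $Y_1$).
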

\begin{proof}[Proof of Theorem \ref{LTh}] We first prove that $C_L(K_1, K_3, K_4)=f(K_1, K_3, K_4)$ and then prove that $\bar C_L(K_1, K_3, K_4)=C_L(K_1, K_3, K_4)$.

\textbf{Proof of $C_L(K_1, K_3, K_4)=f(K_1, K_3, K_4)$:}
The coordination constraint can be written as
\begin{align}\nonumber
\begin{split}
K_1X_1+K_3Y_1+K_4Y_2&=K_1X_1+K_3AX_1+K_4BSX_1\\&=(K_1+K_3A+K_4BS)X_1=0
\end{split}
\end{align}
Thus, $(K_1+K_3A+K_4BS)x_1=0$ for all $x_1$ where $p(x_1)>0$. Therefore,
the necessary and sufficient condition for this equation to hold is that 
\begin{equation} \label{LC}
(K_1+K_3A+K_4BS)\mathscr{V}=0.
\end{equation}
We have that  $K_3A\mathscr{V}\subseteq \mathfrak{Im}(K_3)$ and $K_4BS\mathscr{V}\subseteq \mathfrak{Im}(K_4B)\triangleq\mathscr{U}$. Then since $B$ has $t$ columns we have $\dim \mathscr{U} \leq t$. Therefore, equation \eqref{LC} implies that equation \eqref{eqn:Tmh4} holds for some appropriate $\mathscr{U}$ and $C_L(K_1,K_2,K_3) \geq f(K_1,K_2,K_3)$.

To show the other direction, assume that there exist a vector space $\mathscr{U}\subseteq  \mathfrak{Im}(K_4)$ such that $K_1\mathscr{V}\subseteq  \mathfrak{Im}(K_3) \oplus \mathscr{U}$ and $\dim \mathscr{U}=t$. We will find appropriate matrices $A,B$ and $S$. Let $\{v_i\}$ be a basis for $\mathscr{V}$. Equation \eqref{LC} is true if and only if $(K_1+K_3A+K_4BS)v_i=0$. Now let $q_i=K_1v_i$. Since $K_1\mathscr{V}\subseteq  \mathfrak{Im}(K_3) \oplus \mathscr{U}$ we can find vectors $w_i\in \mathfrak{Im}(K_3)$ and $r_i\in \mathscr{U}$ such that $q_i=w_i+r_i$. 

Now we want to find matrix $A$ such that $w_i=K_3Av_i$ for all $i$. If we show $[v_1|\cdots|v_l]$ by $V$ and $[w_1|\cdots|w_l]$ by $W$. Then we should find $A$ such that $W=K_3AV$.
Since $w_i\in \mathfrak{Im}(K_3)$ we can find matrix $L$ such that $W=K_3L$. Thus we need $L=AV$. Notice that since $\{v_i\}$ are a basis for $\mathscr{V}$, matrix $V$ is full column rank. Therefore its rows span the full space. Let us denote the $i$-th row of $L$ by $l_i$, and $i$-th row of $A$ by $a_i$. We have to solve linear equation $l_i=a_iV$ where $a_i$ is a vector that we need to find. Now since span of rows of $V$ is the full space we can always find proper vectors $a_i$. Hence we can find $A$ such that $W=K_3AV$.  

Now for finding matrices $B$ and $S$ (with $t$ rows); we should have $r_i=K_4BSv_i$. Take $R=[r_1|\cdots|r_l]$. We need $R=K_4BSV$.

Since $r_i\in \mathscr{U}$, we have that $rank(R)\leq t$. 
 Since $r_i \in \mathfrak{Im}(K_4)$, we can find $L'$ such that $R=K_4L'$. Now, we want to prove that we can choose $L'$ such that $rank(L')\leq t$. Notice that each columns of $L'$ can be written as the sum of two vectors such that one of them lies in the kernel space of $K_4$ and the other one is perpendicular to this space. Therefore $L'=L'_1+L'_2$. We have that $K_4L'=K_4L'_1+K_4L'_2$.
The first part $K_4L'_1$ will vanish and $K_4L'=K_4L'_2$. Therefore, without loss of generality we can assume that $\mathfrak{Im}(L') \subseteq Ker(K_4)^{\perp}$ or $\mathfrak{Im}(L')\cap Ker(K_4)=0$. Next, it is known for arbitrary matrices $A$ and $B$ that $rank(AB)=rank(B) - \dim ( \mathfrak{Im}(B) \cap Ker(A) )$. Thus, 
\begin{equation} \label{RankL}
rank(R)=rank(L')-\dim(\mathfrak{Im}(L') \cap Ker(K_4))=rank(L').
\end{equation}
 Thus $rank(L')=rank(R)\leq t$.

Now we should have $L'=BSV$. Again notice that $V$ is full column rank and rank of $L'$ is equal to $t$. Therefore there exist $t$ rows that can produce all rows of $L'$. Choose $S$ such that $t$ rows of $SV$ can produce $L'$. Now if we denote rows of $SV$ by $s_1,\cdots,s_t$ and rows of $L'$ by $l'_1,\cdots,l'_{s_2}$, since $l'_i$ is in the span of $s_1,\cdots,s_t$, we can find real numbers $b_{ij}$ such that $l'_i=\sum_j b_{ij}s_j$ if we take $B=[b_{ij}]$, then we have $L'=B(SV)$. This completes the proof. 

\textbf{Proof of $C_L(K_1, K_3, K_4)=\bar C_L(K_1, K_3, K_4)$:} It suffices to show that 
$$\frac 12 C_{{L}}(I_{2\times 2}\otimes K_1, I_{2\times 2}\otimes K_3, I_{2\times 2}\otimes K_4)=C_L(K_1, K_3, K_4).$$
This is because the above equality can be used inductively to show that for any $n$, which is a power of two, we have
$$\frac 1n C_{{L}}(I_{n\times n}\otimes K_1, I_{n\times n}\otimes K_3, I_{n\times n}\otimes K_4)=C_L(K_1, K_3, K_4).$$
Therefore $C_L(K_1, K_3, K_4)=\bar C_L(K_1, K_3, K_4)$.

Consider that node one observe two vectors $X_{11}$ and $X_{12}$ and then sends $M=S\begin{bmatrix}X_{11}\\X_{12}\end{bmatrix}$ to the second node where $$S=\begin{bmatrix}S_1&S_2\\S_3&S_4\end{bmatrix}.$$ Node one produces $$\begin{bmatrix}Y_{11}\\Y_{12}\end{bmatrix}=\begin{bmatrix}A_1&A_2\\A_3&A_4\end{bmatrix}\begin{bmatrix}X_{11}\\X_{12}\end{bmatrix}$$ and node two produces $$\begin{bmatrix}Y_{21}\\Y_{22}\end{bmatrix}=BM=\begin{bmatrix}B_1&B_2\\B_3&B_4\end{bmatrix}\begin{bmatrix}S_1&S_2\\S_3&S_4\end{bmatrix}\begin{bmatrix}X_{11}\\X_{12}\end{bmatrix}.$$ Coordination constraint for first letter gives us
\begin{equation}
(K_1+K_2A_1+K_3(B_1S_1+B_2S_3))X_{11}+(K_2A_2+K_3(B_1S_2+B_2S_4))X_{12}=0.
\end{equation}
A similar condition holds for the second letter. Since $X_{11}$ and $X_{12}$ are independent, we get
$$(K_1+K_2A_1+K_3(B_1S_1+B_2S_3))X_{11}=(K_2A_2+K_3(B_1S_2+B_2S_4))X_{12}=0.$$
Equivalently,
$$(K_1+K_2A_1+K_3(B_1S_1+B_2S_3))\mathscr{V}=(K_2A_2+K_3(B_1S_2+B_2S_4))\mathscr{V}=0.$$
We claim that without loss of generality, we can make the following two assumptions:
\begin{itemize}
\item
 We can assume that columns of $B_i$ are perpendicular to the kernel space of $K_3$. To see this, observe that each columns of $B_i$ can be written as the sum of two vectors such that one of them lies in the kernel space of $K_3$ and the other one is perpendicular to this space. This gives us a decomposition of matrix $B_i$ as $B_i=B_{i1}+B_{i2}$ where $K_3B_{i1}=0$. Since  $K_3B_{i1}$ vanishes, only $K_3B_{i2}$ remains and we may assume that columns of $B_i$ are perpendicular to the kernel space of $K_3$.

\item we can assume that $S_i z=0$ for $i=1,2,3,4$ and for all vectors $z \in \mathscr V^{\perp}$, where $\mathscr V^{\perp}$ is the linear subspace perpendicular to $\mathscr V$.

Notice that for every matrix $S$ we can find matrix $S'$ with same dimension such that $Sv=S'v$ for all $v \in \mathscr V$; and $S'z=0$ for all  $z \in \mathscr V^{\perp}$. This is because we can define a linear function like $S':\mathbb{F}^{r_1} \to \mathbb{F}^t$ by determining effect of this function on a basis. If $\{v_1,\cdots,v_l\}$ be an orthonormal basis for $\mathscr V$ and we expand this to an orthonormal basis for $\mathbb{F}^r_1$ like $\{v_1,\cdots,v_l,z_1,\cdots,z_{r_1-l}\}$, then we can define $S'(v_i)=Sv_i$ and $S'(z_i)=0$. It is only effect of matrices on $\mathscr V$ in important for us, and we can consider that $Sz=0$ if $z \in \mathscr V^{\perp}$. 
\end{itemize}

Each column of $K_3$ can be written as the sum of two vectors such that one of them lies in the image of $K_2$ and the other one perpendicular to image of $K_2$. This gives us a decomposition of matrix $K_3$ as $K_3=K_{31}+K_{32}$, where all columns of $K_{31}$ are in image of $K_2$ and all columns of $K_{32}$ are perpendicular to this space. Now since node one knows $B$ and $S$ it can choose $A'_1$ and $A'_2$ such that
\begin{align*}
\begin{split}
K_2A'_1&=K_2A_1+K_{31}(B_1S_1+B_2S_3),\\
K_2A'_2&=K_2A_2+K_{31}(B_1S_2+B_2S_4).
\end{split}
\end{align*}
Using the above equations, we get that
\begin{align*}
\begin{split}
(K_1+K_2A'_1&+K_{31}(B_1S_1+B_2S_3))\mathscr{V}=0\\
(K_2A'_2&+K_{32}(B_1S_2+B_2S_4))\mathscr{V}=0
\end{split}
\end{align*}
The second equation implies that $(B_1S_2+B_2S_4)\mathscr{V}=0$ since there is no common vector except zero in images of $K_2$ and $K_{32}$ (all columns of $K_{32}$ were perpendicular to the image space of $K_2$). A similar argument for the second letter shows that $(B_3S_1+B_4S_3)\mathscr{V}=0$.
 
On the other hand, from the assumption that $S_1\mathscr{V}^{\perp}=S_2\mathscr{V}^{\perp}=S_3\mathscr{V}^{\perp}=S_4\mathscr{V}^{\perp}=0$, we have that
$(B_1S_2+B_2S_4)\mathscr{V}^{\perp}=(B_3S_1+B_4S_3)\mathscr{V}^{\perp}=0$. This fact, in conjunction with  $(B_1S_2+B_2S_4)\mathscr{V}=(B_3S_1+B_4S_3)\mathscr{V}=0$, implies that $B_1S_2+B_2S_4=B_3S_1+B_4S_3=0$. 

Now
\begin{equation}\nonumber
BM=\begin{bmatrix}
B_1S_1+B_2S_3 & B_1S_2+B_2S_4\\ B_3S_1+B_4S_3&B_3S_2+B_4S_4\end{bmatrix}=\begin{bmatrix}
B_1S_1+B_2S_3 & 0\\ 0&B_3S_2+B_4S_4
\end{bmatrix}
\end{equation}
So $rank(BM)=rank(B_1S_1+B_2S_3)+rank(B_3S_2+B_4S_4)$. Besides, $rank(BM)\leq 2t$. Therefore either $rank(B_1S_1+B_2S_3)\leq t$ or $rank(B_3S_2+B_4S_4)\leq t$. Thus, one can do one letter coordination with rate less than or equal to $t$. Therefore 
$$\frac 12 C_{{L}}(I_{2\times 2}\otimes K_1, I_{2\times 2}\otimes K_3, I_{2\times 2}\otimes K_4)\geq C_L(K_1, K_3, K_4).$$
 and block coding cannot help.

\end{proof}

\subsection{Linear and non-linear coordination capacities}
\begin{theorem}Assuming $\mathfrak{Im}(K_4) \subseteq \mathfrak{Im}(K_1)$, we have that $C=\bar C=C_L\log|\mathbb{F}|=\bar C_L\log|\mathbb{F}|$. In other words, linear strategies are optimal and block coding does not help decrease the non-linear coordination rate.
\end{theorem}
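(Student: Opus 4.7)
The plan is to sandwich all four quantities. Theorem~\ref{LTh} already gives $C_L(K_1,K_3,K_4) = \bar C_L(K_1,K_3,K_4) = f(K_1,K_3,K_4)$, and the obvious inequalities $\bar C \le C$ (one-shot codes can be repeated to form asymptotic codes of the same rate) and $C \le C_L\log|\mathbb{F}|$ (linear codes are a special subclass of deterministic codes) combine to yield
\[
\bar C \;\le\; C \;\le\; C_L\log|\mathbb{F}| \;=\; \bar C_L\log|\mathbb{F}|.
\]
It therefore suffices to prove the matching converse $\bar C \ge f(K_1,K_3,K_4)\log|\mathbb{F}|$, which then collapses all four quantities to a single value.

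For this converse I would invoke Corollary~\ref{cor} (which fits our one-way, no-side-information linear setup once $Y_1$ is absorbed into the coordination graph) with $q$ chosen uniform on $\mathscr{V}$, so that $H(X_1) = (\dim\mathscr{V})\log|\mathbb{F}|$. The hypothesis $\mathfrak{Im}(K_4)\subseteq\mathfrak{Im}(K_1)$ is used to control $H(X_1\mid Y_2)$ uniformly over $p(y_2|x_1)\in\mathcal{P}$: for every $y_2$ one may choose $v_0(y_2)\in\mathbb{F}^{r_1}$ with $K_1 v_0(y_2) = -K_4 y_2$, so that the $x_1$'s admissible for this $y_2$ form the affine coset $v_0(y_2) + K_1^{-1}(\mathfrak{Im}(K_3))$. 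Intersecting with $\mathscr{V}$ produces a set of cardinality at most $|\mathbb{F}|^{d_3}$, where $d_3 := \dim\bigl(\mathscr{V}\cap K_1^{-1}(\mathfrak{Im}(K_3))\bigr)$, and hence $I(X_1;Y_2) \ge (\dim\mathscr{V} - d_3)\log|\mathbb{F}|$ for every admissible $p(y_2|x_1)$.

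The final piece is the purely dimensional identity $\dim\mathscr{V} - d_3 = f(K_1,K_3,K_4)$. Two applications of rank-nullity — first to $K_1\!\mid_{\mathscr{V}}$, then to its restriction to $\mathscr{V}\cap K_1^{-1}(\mathfrak{Im}(K_3))$, whose image is $K_1\mathscr{V}\cap\mathfrak{Im}(K_3)$ — give
\[
\dim\mathscr{V} - d_3 \;=\; \dim K_1\mathscr{V} \;-\; \dim\bigl(K_1\mathscr{V}\cap\mathfrak{Im}(K_3)\bigr),
\]
which is the dimension of the image of $K_1\mathscr{V}$ in the quotient space $V/\mathfrak{Im}(K_3)$. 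Using feasibility, a basis of this image can be lifted into $\mathfrak{Im}(K_4)$, producing an admissible $\mathscr{U}$ in \eqref{eqn:Tmh4} of that same dimension, which matches the lower bound on $\dim\mathscr{U}$; hence this quantity equals $f$. The main obstacle is the coset-counting and lifting bookkeeping: verifying that the set of $x_1$'s paired with a given $y_2$ is exactly an affine coset (which is where $\mathfrak{Im}(K_4)\subseteq\mathfrak{Im}(K_1)$ is crucial) and that its intersection with $\mathscr{V}$ has the claimed size. Once this algebraic identity is established, the information-theoretic chain closes immediately.
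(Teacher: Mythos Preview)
Your sandwich and converse are correct, but the route you take for the converse is genuinely different from the paper's. The paper closes the chain via LP duality: it writes $\bar C=\log\mathsf{LP}=\log\mathsf{LP}^{\dagger}\ge \log\mathsf{IP}^{\dagger}$ and then shows $\mathsf{IP}^{\dagger}\ge |\mathbb{F}|^{C_L}$ by a \emph{packing} argument --- the $|\mathbb{F}|^{C_L}$ vectors of the optimal subspace $\mathscr{U}$ from Theorem~\ref{LTh} (pulled back through $K_1$) give input symbols whose neighbour sets in the coordination graph are pairwise disjoint, because minimality of $\mathscr{U}$ forces $\mathscr{U}\cap\mathfrak{Im}(K_3)=\{0\}$. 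You instead run a \emph{covering} argument straight through Corollary~\ref{cor}: you bound every neighbourhood's size by $|\mathbb{F}|^{d_3}$, convert this into $I(X_1;Y_2)\ge(\dim\mathscr{V}-d_3)\log|\mathbb{F}|$, and then identify $\dim\mathscr{V}-d_3$ with $f$ by rank--nullity plus a lift into $\mathfrak{Im}(K_4)$ using feasibility. The paper's version is shorter because the minimal $\mathscr{U}$ is already sitting there from Theorem~\ref{LTh} and its disjointness from $\mathfrak{Im}(K_3)$ is a one-line consequence of minimality; your version buys independence from the LP/dual-LP machinery and stays entirely inside the mutual-information lower bound, at the cost of rederiving $f$ as a quotient dimension and doing the lifting step by hand. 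They are dual views of the same fact (many disjoint neighbourhoods versus uniformly small neighbourhoods).

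One caveat worth flagging: taking $q$ uniform on $\mathscr{V}$ is only admissible in Corollary~\ref{cor} when the support of $p$ is all of $\mathscr{V}$; otherwise $q$ places mass outside the support of $p$ and Theorem~\ref{thm:t1} does not apply. The paper's packing argument has the analogous hidden step (one must locate preimages of the $v_i\in\mathscr{U}$ inside the support), so this is not a defect of your method relative to the paper's, but you should state the assumption explicitly.
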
 
\begin{proof} From Theorem~\ref{LTh}, we know that $C_L=\bar C_L$. We know that $C_L\log|\mathbb{F}|\geq C\geq \bar{C}$. The asymptotic coordination capacity $\bar{C}$ is equal to $\log \mathsf{LP}$ for the linear program of equation \eqref{eqn:LP2}, on a graph $\mathcal{G}$ constructed as follows: $\mathcal{G}$ is a bipartite graph with nodes indexed by $\mathcal{X}_1$ on one part, and by $\mathcal{Y}_2$ on the other part. Vertex $x_1$ is connected to $y_2$ if and only if there is some $y_1$ such that $K_1x_1+K_3y_1+K_4y_2=0$.

Let $\mathsf{LP}^{\dagger}$ be the dual of the linear program given in equation \eqref{eqn:LP2}. This linear program is given in \cite[page 2]{c} and consists of variables in the interval $[0,1]$. While the original LP involved a minimization and was asking for a fractional covering, the dual linear program involves maximizing a linear expression and can be understood as a fractional packing linear program. If we restrict the variables of the dual program to integers in $\{0,1\}$, we get a lower bound on  $\mathsf{LP}^{\dagger}$. We denote the answer to this integer program by $\mathsf{IP}^{\dagger}$. It is shown in \cite[page 2]{c} that $\mathsf{IP}^{\dagger}$ is the maximum number of vertices in $\mathcal{X}_1$ whose neighbor sets in the bipartite graph are disjoint.  

To sum this up, we always have the following chain of inequalities:
$$C_L\log|\mathbb{F}|\geq C\geq \bar{C}=\log\mathsf{LP}=\log\mathsf{LP}^{\dagger}\geq \log \mathsf{IP}^{\dagger}.$$

We show that \begin{align}\log \mathsf{IP}^{\dagger}\geq C_L\log|\mathbb{F}|\label{eqnlas},\end{align} which implies that all of the above inequalities are equality. Observe that both $C_L\log|\mathbb{F}|$ and $\log \mathsf{IP}^{\dagger}$ are one-shot expressions and can be computed from the graph $\mathcal{G}$ (rather than its tensor products).

Assume that two vertices $x_1$ and $x'_1$ have a common neighbor like $y_2$. Then there exist $y_1, y'_1$ such that 
\begin{equation}
K_1x_1+K_3y_1+K_4y_2=K_1x'_1+K_3y'_1+K_4y_2=0.
\end{equation}
Hence $K_1(x_1-x'_1)\in  \mathfrak{Im}(K_3)$. Furthermore, $K_1x_1, K_1x'_1\in\mathfrak{Im}(K_1)$.  Therefore, to show that $\mathsf{IP}^{\dagger}\geq N$, it suffices to find vectors $v_1, ..., v_N$ such that 
\begin{align}\label{NL}
\begin{split}
v_i&\in  \mathfrak{Im}(K_1),\\
v_i-v_k&\notin  \mathfrak{Im}(K_3),\qquad \forall i\neq k,
\end{split}
\end{align}
Let $\mathscr U$ be the vector space with minimum dimension in Theorem \eqref{LTh}.  The dimension of $\mathscr U$ is equal to $C_L$. Hence there are $|\mathbb{F}|^{C_L}$ distinct vectors in $\mathscr U$. We claim that the set of vectors in $\mathscr U$, satisfy  both conditions of equation \eqref{NL}. This would imply that $\mathsf{IP}^{\dagger}\geq |\mathbb{F}|^{C_L}$ and gives us equation \eqref{eqnlas}.  First, observe that $\mathscr U\subseteq \mathfrak{Im}(K_4)\subseteq  \mathfrak{Im}(K_1)$. Hence, the first condition of  \eqref{NL} is clearly satisfied. To show the second condition, observe that $\mathscr U$ is a vector space with minimum dimension such that $ \mathfrak{Im}(K_1)\subseteq  \mathfrak{Im}(K_3)\oplus \mathscr{U}$. We claim that this implies $\mathscr U\cap \mathfrak{Im}(K_3)=\{0\}$. Otherwise if $u\in \mathscr{U}\cap\mathfrak{Im}(K_3)$, we can expand $u$ to a basis for $\mathscr U$ like $\{u,v_1,\cdots,v_{C_L-1}\}$. Let $\mathscr U'$ be the linear span of the vectors $\langle v_1,\cdots,v_{C_L-1}\rangle$. Then $\mathscr U'$  is a subspace of $ \mathfrak{Im}(K_4)$ that satisfies $ \mathfrak{Im}(K_3)\oplus \mathscr{U} =  \mathfrak{Im}(K_3)\oplus \mathscr{U}'$. Therefore, we can decrease dimension of $\mathscr U$ which is a contradiction. This completes the proof. 
\end{proof}

\subsection{Extensions to multiple-terminal}\label{sec:extensions}
It is possible to extend the result in linear coordination to certain multi-terminal scenarios. For instance, consider a network with a broadcast channel topology depicted in Fig.~\ref{fig:BC}, where node one observes $X_1\in\mathbb{F}^{r_1}$ and sends $SX_1$ to node two and $TX_1$ to node three. All three nodes produce outputs, but nodes two and three have no inputs. Coordination constraint requires that $K_1X_1+K_4Y_1+K_5Y_2+K_6Y_3=0$. In a linear code, we have that $Y_1=BX_1$, and the outputs of nodes two and three are constructed linearly from their received messages, \emph{i.e., } $Y_2=C(SX_1)$ and $Y_3=D(TX_1)$ for some matrices $C$ and $D$. This gives us the equation
\begin{equation}
(K_1+K_4B+K_5CS+K_6DT)X_1=0.
\end{equation}

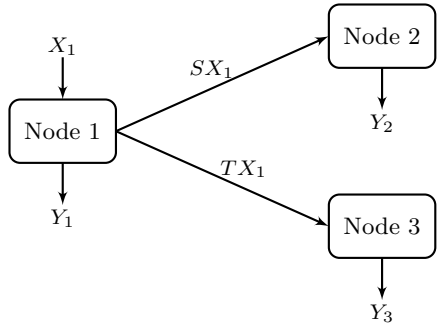
\begin{figure}[h]
\begin{center}
\begin{tikzpicture}[node distance=3.5cm,auto,>=latex', scale=1.4]
    	\draw[black, thick,rounded corners] (0,0) rectangle (1,0.6) node[font=\fontsize{9.5}{144}\selectfont,pos=.5,anchor=center] {Node 1};
	\draw[black, thick,->] (1,.3) -- (3,1.2) node[font=\fontsize{8.5}{144}\selectfont,pos=.45,anchor=south] {$SX_1$}; 
	\draw[black,thick,rounded corners](3,.9) rectangle (4,1.5) node[font=\fontsize{9.5}{144}\selectfont,pos=.5,anchor=center]{Node 2};
	\draw[black,thick,rounded corners](3,-.3) rectangle (4,-.9) node[font=\fontsize{9.5}{144}\selectfont,pos=.5,anchor=center]{Node 3};
	\draw[black, thick,->] (1,.3) -- (3,-.6) node[font=\fontsize{8.5}{144}\selectfont,pos=.6,anchor=south] {$TX_1$}; 
	\draw[black, thick,->] (0.5,1) ->(0.5,0.6) node[font=\fontsize{8.5}{144}\selectfont,pos=.2,anchor=south]{$X_1$};
	\draw[black, thick,->] (0.5,0) -> (0.5,-0.4) node[font=\fontsize{8.5}{144}\selectfont,pos=.85,anchor=north] {$Y_1$};
	\draw[black, thick,->] (3.5,.9) -> (3.5,.5) node[font=\fontsize{8.5}{144}\selectfont,pos=.85,anchor=north] {$Y_2$};
	\draw[black, thick,->] (3.5,-.9) -> (3.5,-1.3) node[font=\fontsize{8.5}{144}\selectfont,pos=.85,anchor=north] {$Y_3$};
\end{tikzpicture}
\end{center}
\caption{Linear coordination with broadcast topology.}\label{fig:BC}
\end{figure}

Thus, $(K_1+K_4B+K_5CS+K_6DT)x_1=0$ for all $x_1$ where $p(x_1)>0$. Therefore, similar to previous parts we define $\mathscr{V}$ to be the linear span of all vectors $x_1\in \mathbb{F}^{r_1}$ such that $p(x_1)>0$. Hence
\begin{equation} 
(K_1+K_4B+K_5CS+K_6DT)\mathscr{V}=0.
\end{equation}

Now with an argument similar to the one given in the proof of Theorem \ref{LTh}, we have following region for sizes of $S_{t_1\times r_1}$ and $T_{t_2\times r_1}$. The pair of $(t_1,t_2)$ is valid in one-shot case if and only if
\begin{align}\nonumber
\begin{split}
&t_1\geq \dim \mathscr{U}_1 \hspace{3 mm} t_2\geq \dim \mathscr{U}_1 \hspace{3 mm}\\
& K_1\mathscr{V} \subseteq  \mathfrak{Im} K_4 \oplus \mathscr{U}_1 \oplus \mathscr{U}_2\\
&\mathscr U_1\subseteq  \mathfrak{Im}(K_5), \hspace{3 mm} \mathscr U_2\subseteq  \mathfrak{Im}(K_6).
\end{split}
\end{align}

Next, consider a MAC channel where we assume that node one and two observe two independent vectors $X_1$ and $X_2$ respectively (see Fig.~\ref{fig:MAC}).  These two nodes send messages $SX_1$ and $TX_2$ to node three. All three nodes produce outputs linearly from their observations and their received vectors. We want to find minimum number of rows of $S$ and $T$.  
Coordination constraint is $K_1X_1+K_2X_2+K_4Y_1+K_5Y_2+K_6Y_3=0$. Assuming that $Y_1=AX_1$, $Y_2=BX_2$ and $Y_3=CSX_1+DTX_2$, we get that
\begin{equation}\nonumber
K_1X_1+K_2X_2+K_4Y_1+K_5Y_2+K_6Y_3=(K_1+K_4A+K_6CS)X_1+(K_2+K_5B+K_6DT)X_2=0.
\end{equation}

\begin{figure}[h]
\begin{center}
\begin{tikzpicture}[node distance=3.5cm,auto,>=latex', scale=1.4]
    	\draw[black, thick,rounded corners] (0,1.1) rectangle (1,1.7) node[font=\fontsize{9.5}{144}\selectfont,pos=.5,anchor=center] {Node 1};
	\draw[black, thick,->] (1,1.4) -- (3,.3) node[font=\fontsize{8.5}{144}\selectfont, pos=.5,anchor=south] {$SX_1$}; 
	\draw[black,thick,rounded corners](0,-.6) rectangle (1,-1.2) node[font=\fontsize{9.5}{144}\selectfont,pos=.5,anchor=center]{Node 2};
	\draw[black,thick,rounded corners](3,0) rectangle (4,.6) node[font=\fontsize{9.5}{144}\selectfont,pos=.5,anchor=center]{Node 3};
	\draw[black, thick,->] (1,-.8) -- (3,.3) node[font=\fontsize{8.5}{144}\selectfont ,pos=.5,anchor=south] {$TX_2$}; 
	\draw[black, thick,->] (0.5,2.1) ->(0.5,1.7) node[font=\fontsize{8.5}{144}\selectfont,pos=.2,anchor=south]{$X_1$};
	\draw[black, thick,->] (0.5,1.1) -> (0.5,.7) node[font=\fontsize{8.5}{144}\selectfont,pos=.85,anchor=north] {$Y_1$};				    
	\draw[black, thick,->] (0.5,-.2) -> (0.5,-.6) node[font=\fontsize{8.5}{144}\selectfont,pos=.2,anchor=south] {$X_2$};
	\draw[black, thick,->] (.5,-1.2) -> (.5,-1.6) node[font=\fontsize{8.5}{144}\selectfont,pos=.85,anchor=north] {$Y_2$};
	\draw[black, thick,->] (3.5,0) -> (3.5,-.4) node[font=\fontsize{8.5}{144}\selectfont,pos=.85,anchor=north] {$Y_3$};
\end{tikzpicture}
\end{center}
\caption{Linear coordination in MAC.}\label{fig:MAC}
\end{figure}
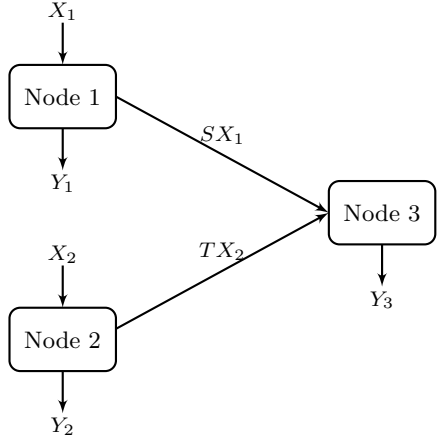

Now since $X_1$ and $X_2$ are independent both parenthesis should be zero. And these terms  are exactly same as the ones for the two nodes case. Therefore, if we denote the number of rows of $S$ and $T$ by$(t_1,t_2)$, then such a pair is valid  in one-shot case if and only if
\begin{align}\nonumber
\begin{split}
t_1\geq\dim &\mathscr{U}_1 \hspace{3 mm} t_2\geq\dim \mathscr{U}_1 \hspace{3 mm}\\
K_1\mathscr{V} &\subseteq  \mathfrak{Im}(K_4) \oplus \mathscr{U}_1, \\
K_2\mathscr{V} &\subseteq  \mathfrak{Im}(K_5) \oplus \mathscr{U}_2,\\
\mathscr U_1,\mathscr U_2&\subseteq  \mathfrak{Im} K_6.
\end{split}
\end{align}

\end{document}